\theoremstyle{definition}
\newtheorem{defn}{Definition}
\theoremstyle{proposition}
\newtheorem{prop}{Proposition}
\theoremstyle{corollary}
\theoremstyle{theorem}
\theoremstyle{lemma}
\newtheorem{lem}{Lemma}
\newtheorem*{remark}{Remark}
\begin{document}
\SetBgContents{}
\title{Exploiting Mobility in Cache-Assisted D2D Networks: Performance Analysis and Optimization}
\author{\IEEEauthorblockN{Rui Wang, Jun Zhang,~\IEEEmembership{Senior Member,~IEEE}, S.H. Song,~\IEEEmembership{Member,~IEEE}, and Khaled B. Letaief,~\IEEEmembership{Fellow,~IEEE}}
	\thanks{This work was supported by the Hong
		Kong Research Grants Council Grant No. 610113. Part of this work has been presented at IEEE ICC, Paris, France, May 2017 \cite{myICC}.}
	\thanks{R. Wang is with Microsoft, Beijing, P. R. China (email: ruiwa@microsoft.com). This work was done when she was with HKUST. J. Zhang, S.H. Song, and K. B. Letaief are with the Department of Electronic and Computer Engineering, Hong Kong University of Science and Technology, Hong Kong (email: \{eejzhang, eeshsong, eekhaled\}@ust.hk).}}
\maketitle
\vspace{-35pt}
\begin{abstract}
Caching popular content at mobile devices, accompanied by device-to-device (D2D) communications, is one promising technology for effective mobile content delivery. User mobility is an important factor when investigating such networks, which unfortunately was largely ignored in most previous works. Preliminary studies have been carried out, but the effect of mobility on the caching performance has not been fully understood. In this paper, by explicitly considering users' contact and inter-contact durations via an alternating renewal process, we first investigate the effect of mobility with a given cache placement. A tractable expression of the data offloading ratio, i.e., the proportion of requested data that can be delivered via D2D links, is derived, which is proved to be increasing with the user moving speed. The analytical results are then used to develop an effective mobility-aware caching strategy to maximize the data offloading ratio. Simulation results are provided to confirm the accuracy of the analytical results and also validate the effect of user mobility. Performance gains of the proposed mobility-aware caching strategy are demonstrated with both stochastic models and real-life data sets. It is observed that the information of the contact durations is critical to design cache placement, especially when they are relatively short or comparable to the inter-contact durations. 
\end{abstract}
\vspace{-10pt}
\begin{IEEEkeywords}
Wireless caching, device-to-device communications, human mobility, renewal process.
\end{IEEEkeywords}
\IEEEpeerreviewmaketitle
\newpage

\section{Introduction}

The mobile data traffic is growing at an exponential rate, among which mobile video is dominant and will account for 78\% of the global data traffic by 2021 \cite{cisco}. To accommodate such heavy traffic, network densification is a common solution which can significantly improve the spectral efficiency and network coverage \cite{5G,bhushan2014network}. However, it incurs a heavy burden on the backhaul links. Caching popular contents at helper nodes or user devices is a promising approach to reduce the backhaul data traffic, as well as, improving the user experience for video streaming applications \cite{femtocachingmaga,wang2014cache,cachebenefit,d2d-cache,jcache,maddah2015decentralized}. Exploiting the predictability and reusability of popular content, caching is an effective technology for mobile content delivery, and is the key enabler for content centric wireless networks. In comparison with the commonly considered femto-caching systems \cite{jcache,femtomobility,femtocaching2,femtocachinginfocom,BScache,li2015distributed}
, caching at devices enjoys unique advantages. First, the aggregate cache capacity grows with the number of devices, which will subsequently increase the caching performance \cite{d2d-cache}. Second, device caching can promote device-to-device (D2D) communications, where nearby mobile devices may communicate directly rather than being forced to communicate through the base station (BS), and thus the BS load can be significantly reduced \cite{design}. 

While there have been lots of studies on D2D caching networks \cite{scaling,tradeoff,tradeoffj,fundamentallimits}, an important characteristic of mobile users, i.e., the user mobility, has been largely ignored. Specifically, a fixed topology is normally adopted by assuming users to be at fixed locations, which is not realistic. Recently, a few initial studies on mobility-aware caching have appeared \cite{magmobility,mobilityradom,mobilitycodedcaching,krishnan2017effect,markovmobility,markovmobility_impact,mobilitycaching,scalingmobility}. It has been demonstrated in \cite{mobilitycaching} that mobility-aware D2D caching can help to improve the BS offloading ratio, which, however, was only shown via numerical results. A full understanding of the effect of mobility will require a thorough theoretical analysis, which is not available yet. Moreover, there is some limitation in the mobility model adopted in previous works. For example, it was assumed that a fixed amount of data can be delivered once two users are in contact, while the variation in contact durations was not considered \cite{mobilitycaching,krishnan2017effect,markovmobility_impact,scalingmobility}. As the user mobility will affect both the contact rate and contact duration, it is important to consider their variations when investigating the impact of user mobility, as well as, designing mobility-aware caching strategies. This forms the main objective of this paper.

\subsection{Related Works}
Caching in D2D networks has attracted lots of recent attentions. In \cite{scaling}, the scaling behavior of the number of D2D collaborating links was identified. Three concentration regimes, classified by the concentration of the file popularity, were investigated. The outage-throughput trade-off and optimal scaling laws of both the throughput and outage probability were studied in \cite{tradeoff,tradeoffj}. One main result was that, with a small file library, the throughput is proportional to the ratio of the cache capacity and file library size, while it is independent of the number of users. Two coded caching schemes, i.e., centralized and decentralized, were proposed in \cite{fundamentallimits}, where the contents are delivered via broadcasting. However, a fixed network topology was assumed in most previous works, which is not realistic.

There are some works considering the effect of user mobility, with different mobility models. It was shown through simulations in \cite{mobilitycaching} that a higher user moving speed results in a higher data offloading ratio, while theoretical analysis is missing. In \cite{krishnan2017effect}, a library of two files was considered while each device randomly caches one file. The coverage probability was derived when a user requests the non-cached file and moves from one location to another. Then, it was showed via numerical results that user mobility has a positive effect on D2D caching. Based on a discrete-time Markov process, the impact of user mobility was investigated in \cite{markovmobility_impact}. In this work, several popular locations (e.g., schools and malls) were considered, and it was assumed that users located in the same location are in contact. The throughput-delay scaling law was derived by characterizing the contact rate of the random walk model in \cite{scalingmobility}. However, in these works, it was assumed that the whole caching content or a complete encoded segment can be delivered once two users are in contact, which failed to take the variation in contact durations into account. 

Some preliminary studies also evaluated the effect of user mobility by considering contact durations. In \cite{mobilityradom}, Golrezaei \emph{et al.} validated the performance of their proposed random caching scheme in the mobility scenario via simulations using a random walk model. In \cite{jarray2016effects}, the effect of mobility was evaluated via numerical results. It was assumed that a file can be successfully delivered if a user is in contact with another user caching the requested file, and the contact duration is enough to deliver the whole file. Since only the impact on the contact duration was considered while ignoring the contact rate, it showed that mobility has a negative impact on the hit performance. In \cite{mobilitycodedcaching}, the effect of mobility was evaluated in D2D networks with coded caching, with the conclusion that mobility can improve the scaling law of throughput. In this work, the timeline was divided into discrete time slots, and it was assumed that one coded segment can be delivered in each time slot while two users may keep contact during multiple time slots. However, this result was based on the assumption that the user locations are random and independent in each time slot, which failed to take into account the temporal correlation of user mobility.


\subsection{Contributions}

In this paper, we investigate a D2D caching network with mobile users, by adopting an alternating renewal process to model the mobility pattern so that both the contact and inter-contact durations are accounted for. Specifically, the timeline for an arbitrary pair of mobile users is divided into \emph{contact durations}, which denote the time intervals when the mobile users are located within the transmission range, and \emph{inter-contact durations}, which denote the time intervals between contact durations \cite{pocket}. Meanwhile, both the contact and inter-contact durations are assumed to follow exponential distributions. The \emph{data offloading ratio}, which is defined as the proportion of data that can be obtained via D2D links, is adopted as the performance metric. By theoretically analyzing the data offloading ratio, we first evaluate the effect of user mobility, and then, propose a mobility-aware caching strategy. The main contributions are summarized as follows: 
\begin{itemize}
	\item We derive an accurate expression for the data offloading ratio, for which the main difficulty is to deal with multiple alternating renewal processes. We tackle it by using a beta random variable to approximate the \emph{communication duration} of a given user through moment matching.
	\item We investigate the effect of mobility for a given cache placement. In the low-to-medium mobility scenario, by assuming that the transmission rate does not change with the user speed, it is proved that the data offloading ratio increases with the user speed for any caching strategy that does not cache the same contents at all the users with contacts.
	\item A cache placement problem is formulated in order to maximize the data offloading ratio. By reformulating the original problem into a submodular maximization problem over a matroid constraint, a greedy algorithm is proposed, which can achieve a $\frac{1}{2}$-approximation.
	\item Simulation results validate the accuracy of the derived expression, as well as the effect of user mobility. Moreover, both stochastic models and real-life data sets are used to evaluate the performance of the proposed mobility-aware caching strategy, which is shown to outperform both random and popular caching strategies. Furthermore, it is shown that the variation of contact durations is important while designing caching strategies, especially when the average contact duration is relatively short or comparable to the inter-contact duration.
\end{itemize}
In comparison, our previous work \cite{mobilitycaching} assumed constant contact durations, and did not provide any analytical performance evaluation. With a more realistic mobility model, the cache placement problem formulated in the current paper cannot be solved directly by the algorithm in \cite{mobilitycaching}. Therefore, we improve on both performance analysis and cache placement optimization.

\subsection{Organization}
The remainder of this paper is organized as follows. In Section II, we introduce the mobility and caching models, as well as the performance metric. An approximate expression of the data offloading ratio is derived in Section III. The effect of user mobility is investigated in Section IV, and a mobility-aware caching strategy is proposed in Section V. The simulation results are shown in Section VI. Finally, Section VII concludes the paper.


\section{System Model and Problem Formulation}

In this section, we first introduce the alternating renewal process to model the user mobility pattern, and discuss the caching and file delivery models. Then, the performance metric, i.e., the data offloading ratio, is defined.

\subsection{Mobility Model}
\begin{figure}[!t]
  \centering
  \includegraphics[width=4in]{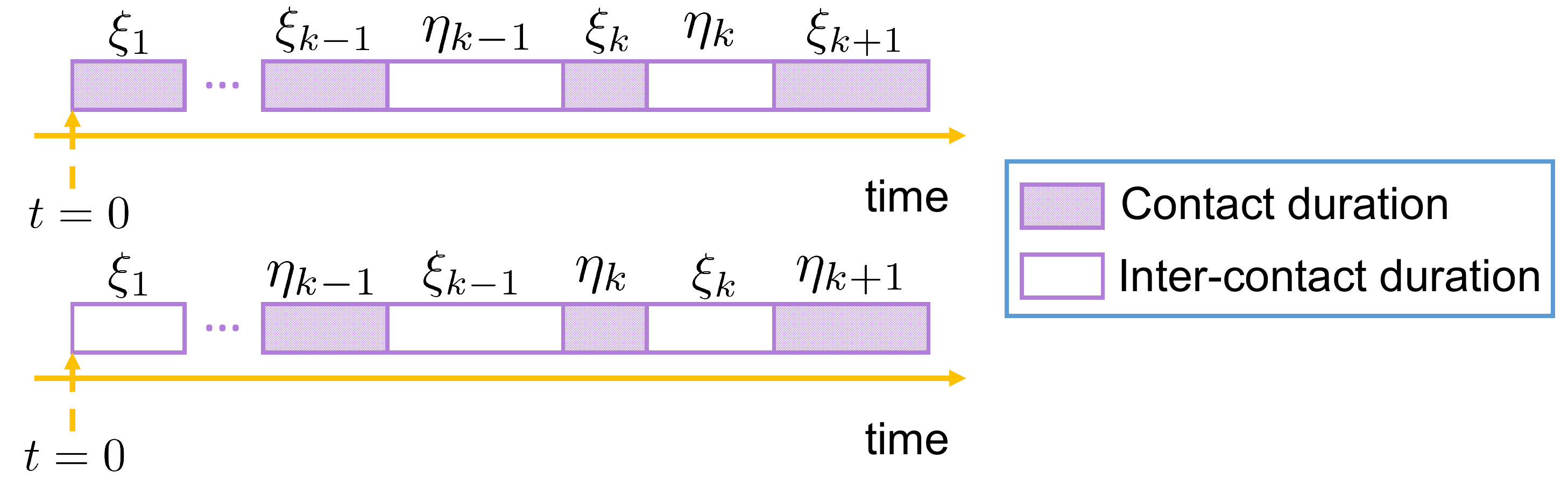}
  \caption{The timeline for an arbitrary pair of mobile users.}
  \label{intercontact}
\end{figure}

The inter-contact model, which captures the temporal correlation of the user mobility \cite{exintercontactmodel}, is adopted to model the user mobility pattern. Specifically, the timeline of each pair of users is divided into  \emph{contact durations}, i.e., the time intervals when the users are in the transmission range, and \emph{inter-contact durations}, i.e., the times intervals between consecutive contact durations. Considering that contact durations and inter-contact durations appear alternatively in the timeline of a user pair, similar to \cite{renewalmodel}, an alternating renewal process \cite{renewalprocess} is applied to model the pairwise contact pattern, as defined below.

\begin{defn}
Consider a stochastic process with state space $\{U,V\}$. The successive durations for the system to be in states $U$ and $V$ are denoted as $\xi_k,k=1,2,\cdots$ and $\eta_k,k=1,2,\cdots$, respectively, which are independent and identically distributed. Specifically, the system starts at state $U$ and remains for $\xi_1$, then switches to state $V$ and stays for $\eta_1$, then backs to state $U$ and stays for $\xi_2$, and so forth. 
Let $\psi_k=\xi_k+\eta_k$. The counting process of $\psi_k$ is called an \emph{alternating renewal process}.
\end{defn}
As shown in Fig. \ref{intercontact}, if a pair of users is in contact at $t=0$, $\xi_k$ and $\eta_k$ represent the contact durations and inter-contact durations, respectively. Otherwise, $\xi_k$ and $\eta_k$ represent the inter-contact durations and contact durations, respectively. It was shown in \cite{exintercontact} that exponential curves well fit the distribution of inter-contact durations, while in \cite{excontact}, it was identified that an exponential distribution is a good approximation for the distribution of the contact durations. Thus, the same as \cite{renewalmodel}, we assume that the contact and inter-contact durations follow independent exponential distributions. For simplicity, the timelines of different user pairs are assumed to be independent. Specifically, we consider a network with $N_u$ users, with the user index set denoted as $\mathcal{S}=\{1,2,\cdots,N_u\}$. The contact and inter-contact durations of users $i  \in \mathcal{S}$ and $j \in \mathcal{S} \backslash \{i\}$ follow independent exponential distributions with parameters $\lambda^C_{i,j}$ and $\lambda^I_{i,j}$, respectively. If users $i  \in \mathcal{S}$ and $j \in \mathcal{S} \backslash \{i\}$ have no contact, the parameters are $\lambda^C_{i,j}=\infty$ and $\lambda^I_{i,j}=0$.
\subsection{Caching and File Transmission Model}
\begin{figure}[!t]
	\centering
	\includegraphics[width=3.5in]{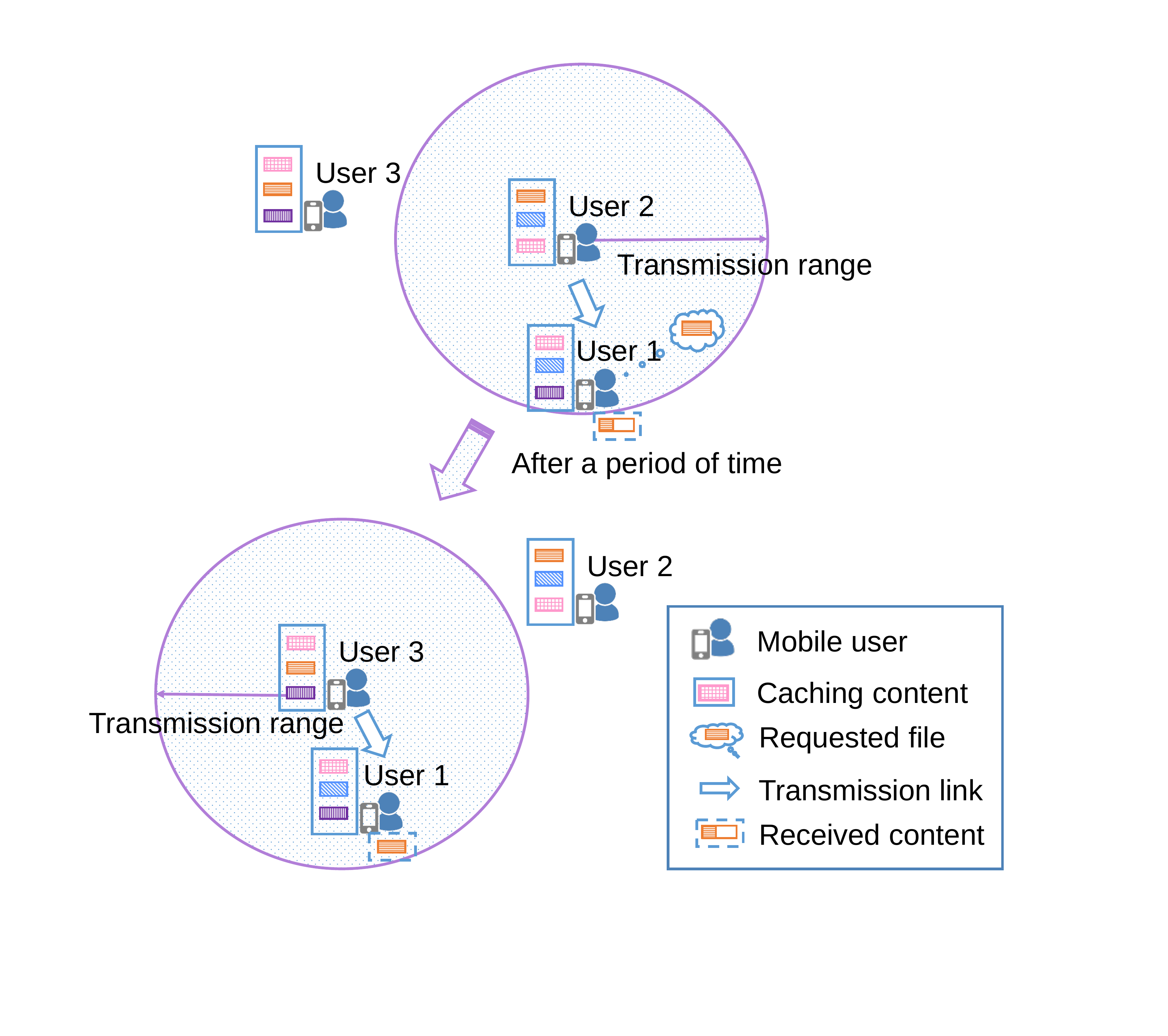}
	\caption{A sample network with three mobile users.}
	\label{model}
\end{figure}

There is a library with $N_f$ files, whose index set is denoted as $\mathcal{F}=\{1,2,\cdots,N_f\}$, each with size $F$. Each user device has a limited storage capacity with size $C$, and each file is completely cached or not cached at all at each user device. Specifically, the cache placement is denoted as
\begin{equation}
x_{j,f}=
\begin{cases}
1, \text{if user $j$ caches file $f$}, \\
0, \text{if user $j$ does not cache file $f$},
\end{cases}
\end{equation} 
where $j \in \mathcal{S}$ and $f \in \mathcal{F}$. User $i \in \mathcal{S}$ is assumed to request a file $f \in \mathcal{F}$ with probability $p^r_{i,f}$, where $\sum \limits_{f \in \mathcal{F} } p^r_{i,f} =1$. When a user requests a file $f$, it will first check its own cache, and then download the file from the users that are in contact and store file $f$, with a fixed transmission rate, denoted as $r_0$. At every time instant, it can only download from one other user. When a user is in contact with multiple users caching the requested file at the same time, it will randomly choose one to download. Meanwhile, D2D pairs are under the control of the base station, and orthogonal resource allocation is assumed for different simultaneous D2D communication pairs. Thus, there is no inter-user interference. We also assume that each user only requests one file at each time, and a new request is generated after it finishes downloading the previous file. If the user cannot get the whole file within a certain delay threshold, denoted as $\tau_0$, it will download the remaining part from the BS. We assume that the delay threshold is larger than the time required to download each content (i.e., $\tau_0>\frac{F}{r_0}$). Fig. \ref{model} shows a sample network, where user $1$ gets part of the requested file during the contact with user $2$, then gets the whole file after encountering user $3$. 

\subsection{Performance Metric}
\begin{figure}[!t]
	\centering
	\includegraphics[width=4in]{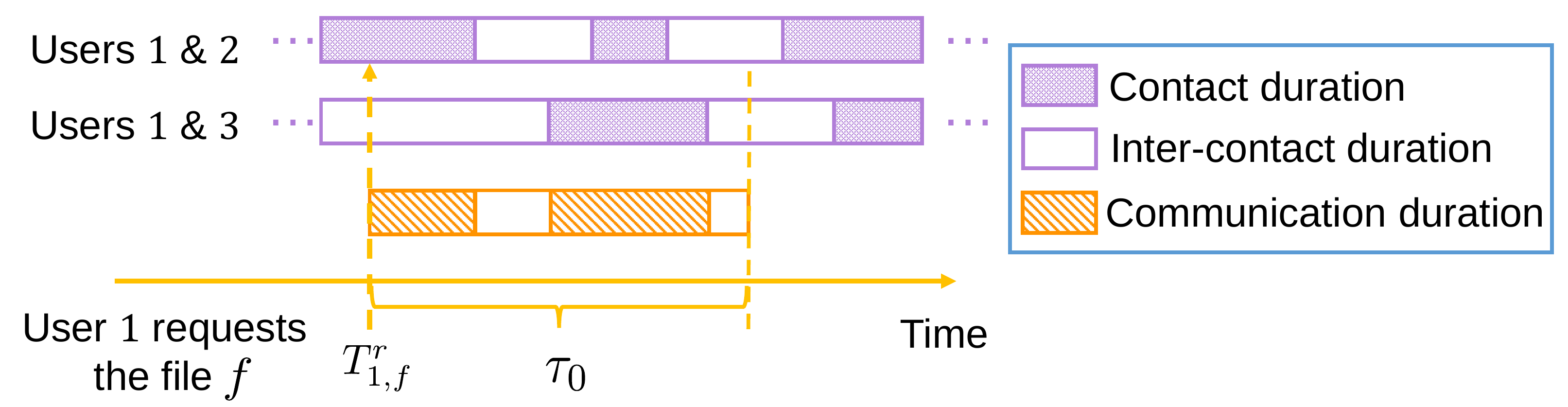}
	\caption{An illustration of the communication duration.}
	\label{transmissiontime}
\end{figure}
The \emph{data offloading ratio}, which is defined as the expected percentage of requested content that can be obtained via D2D links rather than downloading from the BS, is used as the performance metric. Specifically, the data offloading ratio for user $i \in \mathcal{S}$ requesting file $f \in \mathcal{F}$, with $x_{i,f}=0$, is defined as
\begin{equation}
\mathcal{R}_{i,f}=\mathbb{E} _{D_{i,f}}\left[ \min \left( D_{i,f}/F ,1 \right) \right],
\end{equation}
where $\mathbb{E}[\cdot]$ denotes the expectation and $D_{i,f}$ denotes the amount of requested data that can be delivered via D2D links when user $i$ requests file $f$. Since a fixed transmission rate is assumed, $D_{i,f}$ can be written as $D_{i,f}=r_0 \tau^c_{i,f}$, where $\tau^c_{i,f}$ is the \emph{communication duration} for user $i$ to download file $f$ from other users caching file $f$ within time $\tau_0$. We assume that user $i$ can download file $f$ while at least one user caching file $f$ is in contact, where the handover time is ignored. Fig. \ref{transmissiontime} shows the communication duration of user $1$ for the situation in Fig. \ref{model}, where $T_{1,f}^r$ is the time instant when user $1$ requests a file $f \in \mathcal{F}$. Thus, we get
\begin{equation} \label{ifratio}
\mathcal{R}_{i,f}=\mathbb{E} _{\tau^c_{i,f}}\left[ \min \left( r_0 \tau^c_{i,f}/F ,1 \right) \right].
\end{equation}
Then, the data offloading ratio when user $i$ requests file $f$ is $\left[ x_{i,f} + (1- x_{i,f}) \mathcal{R}_{i,f}) \right)]$. Considering the file request probabilities, the overall data offloading ratio is
\begin{align} \label{define_ratio}
\mathcal{R}= \frac{1}{N_u}\sum \limits_{i \in \mathcal{S}} \sum \limits_{f \in \mathcal{F}} p^r_{i,f} \left[ x_{i,f} +  (1-x_{i,f}) \mathcal{R}_{i,f} \right].
\end{align}
While this performance metric has been used in \cite{mobilitycaching} to design mobility-aware caching strategies, the variation of contact durations has not been considered. Furthermore, there is no theoretical analysis available, and it is unclear how user mobility will affect the caching performance. We shall first provide theoretical analysis in Section III for the data offloading ratio with a given cache placement. The analytical result will then be used in Section IV to reveal the effect of user moving speeds, and in Section V to optimize the caching strategy.


\section{Theoretical Analysis of Data Offloading Ratio}
The main difficulty in theoretically evaluating the data offloading ratio is to get the distribution of the communication duration. As this distribution is highly complicated, instead of deriving it directly, we develop an accurate approximation. In this section, we first approximate the distribution of the communication duration by a beta distribution, and then an approximation of the data offloading ratio is obtained.

\subsection{Communication Duration Analysis}
To assist the analysis of the communication duration, we first define some stochastic processes.
\begin{defn} \label{hij}
We model the timeline of a pair of users $i,j \in S, i\neq j$ as an alternating renewal process, and denote $H_{i,j}(t)$, $t \in [0,\infty)$ as the state at time $t$. Specifically, $H_{i,j}(t)=1$ means that users $i$ and $j$ are in contact at time instant $t$. Otherwise, $H_{i,j}(t)=0$. The durations of staying in states $1$ and $0$ follow independent exponential distributions with parameters $\lambda^C_{i,j}$ and $\lambda^I_{i,j}$, respectively.  
\end{defn}
\begin{defn} \label{hif}
When requesting file $f \in \mathcal{F}$, the timeline of user $i \in \mathcal{S}$ can be divided into intervals when it is in contact with at least one user caching file $f$, and intervals that user $i$ is out of the transmission range of all the users caching file $f$. We use a random process to model the timeline of user $i$ requesting file $f$, and denote $H^f_{i}(t)$, $t \in [0,\infty)$ as the state at time $t$. Specifically, $H^f_{i}(t)=1$ means that user $i$ can download file $f$ from a user caching file $f$ at time instant $t$. Otherwise, $H^f_{i}(t)=0$.
\end{defn}
Based on Definitions \ref{hij} and \ref{hif}, we can get the relationship between $H^f_{i}(t)$ and $H_{i,j}(t)$ as $H^f_i(t)=1- \prod \limits_{j \in \mathcal{S} \backslash \{i\},x_{j,f}=1}\left[1-H_{i,j}(t) \right]$. Similar to \cite{renewalmodel}, it is reasonable to assume that when a user requests a file, the alternating process between each pair of users has been running for a long time. Thus, denote $T^r_{i,f}$, $i \in \mathcal{S}$ and $f \in \mathcal{F}$, as the time instant when user $i$ requests file $f$, and the corresponding communication duration $\tau^c_{i,f}$ can be derived as
\begin{equation} \label{time}
\tau^c_{i,f}= \lim \limits_{T^r_{i,f} \to \infty} \int _{T^r_{i,f}}^{T^r_{i,f}+\tau_0} H^f_i(t) dt.
\end{equation}
In the following lemma, we derive the expectation and variance of the communication duration.

\begin{lem} \label{ev}
When user $i \in \mathcal{S}$ requests file $f \in \mathcal{F}$, with $x_{i,f}=0$, the expectation and variance of its communication duration are
\begin{equation} \label{expectation_t}
\mathbb{E}\left[\tau^c_{i,f} \right]=\tau_0\left( 1-\prod \limits_{j \in \mathcal{S}, x_{j,f}=1} p_{i,j}^I\right),
\end{equation}
and
\begin{align} \label{var_t}
\mathrm{Var} \left[ \tau^c_{i,f} \right] = & 2 \int_0^{\tau_0} (\tau_0-u) \prod \limits_{j \in \mathcal{S}, x_{j,f}=1} p_{i,j}^I \left[ p_{i,j}^I + \left( 1-p_{i,j}^I\right) e^{-u(\lambda_{i,j}^C+\lambda^I_{i,j})} \right] du \notag \\
& -\tau_0^2 \prod \limits_{j \in \mathcal{S}, x_{j,f}=1} \left(p_{i,j}^I\right)^2,
\end{align}
where $p_{i,j}^I=\frac{\lambda^C_{i,j}}{\lambda^C_{i,j}+\lambda^I_{i,j}}$ denotes the probability that users $i$ and $j$ are not in contact at a time instant $t \to \infty$.
\end{lem}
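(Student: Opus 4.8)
\emph{Proof plan.} The plan is to compute both moments directly from the integral representation \eqref{time}, exchanging expectation with integration via Fubini's theorem, and to exploit the stationarity assumption ($T^r_{i,f}\to\infty$, so that each $H_{i,j}$ may be taken in its steady state over the window $[0,\tau_0]$) together with the assumed independence of the pairwise processes. Since $H^f_i(t)$ is an indicator, its first moment is just $P\!\left(H^f_i(t)=1\right)$. Using the relation $H^f_i(t)=1-\prod_{j:\,x_{j,f}=1}\left[1-H_{i,j}(t)\right]$ and the independence across distinct pairs, the complementary event factorizes as $P\!\left(H^f_i(t)=0\right)=\prod_{j:\,x_{j,f}=1}P\!\left(H_{i,j}(t)=0\right)=\prod_{j:\,x_{j,f}=1}p^I_{i,j}$, since the steady-state probability of each pair being out of contact is exactly $p^I_{i,j}$. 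Integrating the constant $1-\prod_j p^I_{i,j}$ over $[0,\tau_0]$ yields \eqref{expectation_t} at once.

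For the variance I would compute the second moment by squaring the integral and applying Fubini once more, reducing the task to the two-point function $\mathbb{E}\!\left[H^f_i(t)H^f_i(s)\right]=P\!\left(H^f_i(t)=1,H^f_i(s)=1\right)$. Passing to complements, this equals $1-2\prod_j p^I_{i,j}+P\!\left(H^f_i(t)=0,H^f_i(s)=0\right)$, and the last joint probability again factorizes over the cache-holders. The crucial ingredient is therefore the autocorrelation of a single alternating renewal process, $P\!\left(H_{i,j}(t)=0,H_{i,j}(s)=0\right)$, evaluated at stationarity.

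The main obstacle is obtaining this two-point probability, and this is precisely where the exponential assumption on \emph{both} durations is essential: because the sojourn times in states $0$ and $1$ are exponential, $H_{i,j}(t)$ is a two-state continuous-time Markov chain, with transition rate $\lambda^I_{i,j}$ from state $0$ to state $1$ and rate $\lambda^C_{i,j}$ from state $1$ to state $0$. Its stationary distribution places mass $p^I_{i,j}$ on state $0$, and its transition function is the standard two-state formula $P_{00}(u)=p^I_{i,j}+\left(1-p^I_{i,j}\right)e^{-(\lambda^C_{i,j}+\lambda^I_{i,j})|u|}$, whose decay rate is the sum of the two exit rates. Multiplying by the stationary weight $p^I_{i,j}$ and taking the product over all $j$ with $x_{j,f}=1$ gives $P\!\left(H^f_i(t)=0,H^f_i(s)=0\right)=\prod_{j:\,x_{j,f}=1}p^I_{i,j}\left[p^I_{i,j}+\left(1-p^I_{i,j}\right)e^{-(\lambda^C_{i,j}+\lambda^I_{i,j})|t-s|}\right]$.

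Finally I would collapse the double integral over $[0,\tau_0]^2$ with the elementary identity $\int_0^{\tau_0}\!\int_0^{\tau_0} g(|t-s|)\,dt\,ds=2\int_0^{\tau_0}(\tau_0-u)g(u)\,du$, obtained by the change of variables $u=|t-s|$ and symmetry in $t,s$. The constant part of the two-point function integrates to $\tau_0^2\left(1-2\prod_j p^I_{i,j}\right)$, and subtracting the squared mean $\tau_0^2\left(1-\prod_j p^I_{i,j}\right)^2$ cancels these constant contributions while leaving exactly the integral term in \eqref{var_t}; the residual $-\tau_0^2\prod_j \left(p^I_{i,j}\right)^2$ comes from the square of the mean, using $\left(\prod_j p^I_{i,j}\right)^2=\prod_j \left(p^I_{i,j}\right)^2$. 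What remains is routine algebraic simplification.
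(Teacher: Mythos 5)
Your proposal is correct and follows essentially the same route as the paper's Appendix A: both compute the moments from the integral representation via Fubini and stationarity, factorize the out-of-contact event across independent pairs, and use the two-state transition probability $P_{00}(u)=p^I_{i,j}+\bigl(1-p^I_{i,j}\bigr)e^{-(\lambda^C_{i,j}+\lambda^I_{i,j})u}$ before collapsing the double integral with the $|t-s|$ substitution. The only cosmetic difference is that you justify the two-point formula by identifying $H_{i,j}$ as a two-state continuous-time Markov chain (valid since both sojourn times are exponential), whereas the paper cites it from standard alternating renewal theory; the resulting algebra, including the cancellation that leaves $-\tau_0^2\prod_j\bigl(p^I_{i,j}\bigr)^2$, is identical.
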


\begin{proof}
See Appendix A.
\end{proof}
While the mean and variance have been derived, the exact distribution of the communication duration is still highly intractable. Since it is a bounded random variable, we propose to approximate the distribution of $\tau^c_{i,f}$ by a beta distribution, which has been widely used to model random variables limited to finite ranges. For example, it has been used to model the prior distribution for a Bernoulli trial and the timestamp over a time span \cite{betaexample_time}. The pdf of a beta distribution is 
\begin{equation}
f(x) = \frac{x^{\alpha-1} (1-x)^{\beta-1}}{B(\alpha,\beta)},
\end{equation}
where $\alpha$ and $\beta$ are shape parameters, and $B(\cdot,\cdot)$ is the beta function. Specifically, if $\sum_{j \in \mathcal{S} \backslash \{i\}} x_{j,f}>0$, which means that user $i$ may download file $f$ from at least one other user, we approximate $\tau^c_{i,f}/{\tau_0}$ by $Z_{i,f}$, where $Z_{i,f} \sim \text{Beta}(\alpha_{i,f},\beta_{i,f})$, $i \in \mathcal{S}$ and $f \in \mathcal{F}$. Otherwise, $\tau^c_{i,f}=0$. By matching the first two moments (i.e.,  $\mathbb{E}[Z_{i,f}]=\mathbb{E}[\tau^c_{i,f}/\tau_0]$ and $\mathrm{Var}[Z_{i,f}]=\mathrm{Var}[\tau^c_{i,f}/\tau_0]$) the parameters of the beta distribution can be derived as\footnote{The parameters of the beta distribution should be positive, and it can be proved that $\alpha_{i,f}>0$ and $\beta_{i,f}>0$, by $e^{-u(\lambda_{i,j}^I+\lambda_{i,j}^C)} \le 1$.}
\begin{equation} \label{beta_p}
\begin{cases}
\alpha_{i,f}=\frac{\mathbb{E}[\tau^c_{i,f}]^2 (\tau_0-\mathbb{E}[\tau^c_{i,f}])}{ \mathrm{Var}[\tau^c_{i,f}]\tau_0}-\frac{\mathbb{E}[\tau^c_{i,f}]}{\tau_0}, \\
\beta_{i,f}=\frac{\tau_0-\mathbb{E}[\tau^c_{i,f}]}{\mathbb{E}[\tau^c_{i,f}]} \alpha_{i,f}.
\end{cases}
\end{equation}

\subsection{Data Offloading Ratio Analysis}  
Based on the above approximation, we get an approximate expression of the data offloading ratio in Proposition \ref{E_od}.
\begin{prop} \label{E_od}
The data offloading ratio is
\begin{align} \label{ratio}
\mathcal{R} = &\frac{1}{N_u}\sum \limits_{i \in \mathcal{S}} \sum \limits_{f \in \mathcal{F}} p^r_{i,f} \left[ x_{i,f} + (1-x_{i,f}) \mathcal{R}_{i,f} \right],
\end{align}
where $\mathcal{R}_{i,f}$ is approximated by
\begin{align} \label{ratio_g}
\mathcal{R}^a_{i,f} = 1-I_{\frac{F}{\tau_0 r_0}} \left(\alpha_{i,f},\frac{\tau_0-\mathbb{E}[\tau^c_{i,f}]}{\mathbb{E}[\tau^c_{i,f}]} \alpha_{i,f}\right)
+\frac{\mathbb{E}[\tau^c_{i,f}]r_0}{F} I_{\frac{F}{\tau_0 r_0}}\left(\alpha_{i,f}+1,\frac{\tau_0-\mathbb{E}[\tau^c_{i,f}]}{\mathbb{E}[\tau^c_{i,f}]} \alpha_{i,f}\right) ,
\end{align}
if $\sum_{j \in \mathcal{S} \backslash \{i\}} x_{j,f}>0$, and $\mathcal{R}_{i,f}^a=0$ elsewhere, where $I_q(\cdot,\cdot)$ is the incomplete beta function, $\mathbb{E}[\tau^c_{i,f}]$ is given in (\ref{expectation_t}), and $\alpha_{i,f}$ is given in (\ref{beta_p}).
\end{prop}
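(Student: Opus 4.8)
The plan is to reduce the proposition to a single integral computation against a beta density. The outer summation in (\ref{ratio}) is nothing but the definition (\ref{define_ratio}), so the only substantive claim is the closed form (\ref{ratio_g}) for $\mathcal{R}_{i,f}$ in the nontrivial case $\sum_{j \in \mathcal{S}\backslash\{i\}} x_{j,f}>0$; the complementary case gives $\mathcal{R}^a_{i,f}=0$ immediately, since then $\tau^c_{i,f}=0$. Starting from the definition (\ref{ifratio}) and inserting the beta approximation $\tau^c_{i,f}\approx\tau_0 Z_{i,f}$ with $Z_{i,f}\sim\text{Beta}(\alpha_{i,f},\beta_{i,f})$, the quantity to evaluate is
\begin{equation}
\mathcal{R}^a_{i,f}=\mathbb{E}_{Z_{i,f}}\left[\min\left(\frac{r_0\tau_0}{F}Z_{i,f},1\right)\right].
\end{equation}

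Writing $q=\frac{F}{r_0\tau_0}$, the standing assumption $\tau_0>\frac{F}{r_0}$ guarantees $q\in(0,1)$, so the kink of the $\min$ lies strictly inside the support of $Z_{i,f}$. This is what makes the split clean: for $Z_{i,f}<q$ the $\min$ equals $\frac{r_0\tau_0}{F}Z_{i,f}$, and for $Z_{i,f}\ge q$ it equals $1$. I would therefore split the expectation at $q$ as
\begin{equation}
\mathcal{R}^a_{i,f}=\frac{r_0\tau_0}{F}\int_0^q z\,f(z)\,dz+\int_q^1 f(z)\,dz,
\end{equation}
where $f$ is the $\text{Beta}(\alpha_{i,f},\beta_{i,f})$ density. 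The second integral is immediate: $\int_q^1 f(z)\,dz=1-I_q(\alpha_{i,f},\beta_{i,f})$, which produces the first two terms of (\ref{ratio_g}) once $\beta_{i,f}$ is read off from (\ref{beta_p}).

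The first integral is where the real work lies, and the main point requiring care is the bookkeeping between the regularized and unregularized incomplete beta functions. Absorbing the extra factor of $z$ raises the first shape parameter by one, giving $\int_0^q z^{\alpha_{i,f}}(1-z)^{\beta_{i,f}-1}dz=B(\alpha_{i,f}+1,\beta_{i,f})\,I_q(\alpha_{i,f}+1,\beta_{i,f})$, so the leading coefficient becomes $\frac{r_0\tau_0}{F}\cdot\frac{B(\alpha_{i,f}+1,\beta_{i,f})}{B(\alpha_{i,f},\beta_{i,f})}=\frac{r_0\tau_0}{F}\cdot\frac{\alpha_{i,f}}{\alpha_{i,f}+\beta_{i,f}}$ via the gamma-function identity $B(\alpha+1,\beta)/B(\alpha,\beta)=\alpha/(\alpha+\beta)$.

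The concluding step, which I regard as the key step because it is where the moment-matching choice of parameters feeds back into the formula, is to simplify $\frac{\alpha_{i,f}}{\alpha_{i,f}+\beta_{i,f}}$. Since the parameters were fixed by matching $\mathbb{E}[Z_{i,f}]=\mathbb{E}[\tau^c_{i,f}]/\tau_0$, this ratio equals $\mathbb{E}[\tau^c_{i,f}]/\tau_0$; one verifies it directly from (\ref{beta_p}), where $\alpha_{i,f}+\beta_{i,f}$ factors as $\alpha_{i,f}\tau_0/\mathbb{E}[\tau^c_{i,f}]$. Substituting, the leading coefficient collapses to $\frac{r_0\mathbb{E}[\tau^c_{i,f}]}{F}$, yielding exactly the last term of (\ref{ratio_g}). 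Combining the two integrals gives (\ref{ratio_g}), and inserting it into (\ref{ratio}) completes the proof, with $\mathbb{E}[\tau^c_{i,f}]$ and $\alpha_{i,f}$ carried over verbatim from (\ref{expectation_t}) and (\ref{beta_p}).
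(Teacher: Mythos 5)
Your proof is correct and takes essentially the same route as the paper's Appendix B: split $\mathbb{E}\left[\min\left(r_0\tau_0 Z_{i,f}/F,1\right)\right]$ at $q=\frac{F}{r_0\tau_0}$, absorb the extra factor of $z$ by the identity $B(\alpha+1,\beta)/B(\alpha,\beta)=\alpha/(\alpha+\beta)$, and then use the moment-matching relation $\alpha_{i,f}/(\alpha_{i,f}+\beta_{i,f})=\mathbb{E}[\tau^c_{i,f}]/\tau_0$ together with $\beta_{i,f}=\frac{\tau_0-\mathbb{E}[\tau^c_{i,f}]}{\mathbb{E}[\tau^c_{i,f}]}\alpha_{i,f}$ from (\ref{beta_p}) to collapse the coefficient to $\frac{\mathbb{E}[\tau^c_{i,f}]r_0}{F}$. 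Your handling of the degenerate case $\sum_{j\in\mathcal{S}\backslash\{i\}}x_{j,f}=0$ (where $\tau^c_{i,f}=0$) and the observation that $\tau_0>F/r_0$ places the kink strictly inside $(0,1)$ are consistent with, and slightly more explicit than, the paper's treatment.
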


\begin{proof}
See Appendix B.
\end{proof}
\begin{remark}
	To evaluate the data offloading ratio in Proposition \ref{E_od}, the main complexity is in calculating the integral in (\ref{var_t}) when calculating the variance of the communication duration. Simulations will show that the approximation is quite accurate. This analytical result will serve as the basis for evaluating the effect of user mobility in Section IV and designing the mobility-aware caching strategy in Section V.
\end{remark}

\section{Effect of User Moving Speed}

In this section, we investigate how changing the user moving speed will affect the data offloading ratio for a given caching strategy. If all the user pairs $i,j \in \mathcal{S}, i \neq j$ with contacts (i.e., $0 < \lambda_{i,j}^I,\lambda_{i,j}^C  < \infty$) cache the same contents, D2D communications will not help the content delivery. Thus, in the following, we assume that the contents cached at the users with contacts are not all the same. In other words, there exists a pair of users $i,j \in \mathcal{S}, i\neq j$ with $0 < \lambda_{i,j}^I,\lambda_{i,j}^C  < \infty$, and a file $f \in \mathcal{F}$, such that $x_{i,f}=1$ and $x_{j,f}=0$. This investigation is based on the approximate expression in (\ref{ratio}), and simulations will be provided later to verify the results.
\subsection{Effect of Moving Speed on Communication Duration}

We first characterize the relationship between the user moving speed and the parameters $\lambda^C_{i,j}$ and $\lambda^I_{i,j}$ in Lemma \ref{speed}.
\begin{lem} \label{speed}
When all the user speeds change by $\mu$ times, the contact and inter-contact parameters will also change by $\mu$ times (i.e., from $\lambda_{i,j}^C$ and $\lambda_{i,j}^I$ to $\mu\lambda_{i,j}^C$ and $\mu\lambda_{i,j}^I$, $i,j \in \mathcal{S}$, respectively).
\end{lem}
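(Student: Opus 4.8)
The plan is to argue that scaling every user's speed by a factor $\mu$ is equivalent to reparameterizing the time axis by that same factor, under which each contact and inter-contact interval is compressed by $1/\mu$, so that the exponential rates scale up by $\mu$. I would begin by fixing, for each user, the geometric path it traverses and observing that the speed enters only through the temporal parameterization of that path: if $\mathbf{r}_i(t)$ denotes the position of user $i$ at time $t$ under the original speeds, then under speeds scaled by $\mu$ its position is $\mathbf{r}_i(\mu t)$, since the same arc length is covered $\mu$ times faster. This is the key modeling premise, and it holds for the standard mobility models (e.g.\ the random walk and random waypoint models) in which the chosen directions or waypoints are spatial quantities drawn independently of the speed.

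Next, I would use the fact that whether users $i$ and $j$ are in contact is a purely spatial predicate---their separation lies within the transmission range---and is therefore a function of $\mathbf{r}_i(t)-\mathbf{r}_j(t)$ alone. Writing $H_{i,j}(t)$ and $H'_{i,j}(t)$ for the contact-state processes of Definition~\ref{hij} under the original and scaled speeds respectively, the position relation above yields $H'_{i,j}(t)=H_{i,j}(\mu t)$ for every $t$. Hence the alternating renewal process of the pair is merely time-compressed: a sojourn in a given state that had length $D$ in the original process has length $D/\mu$ in the scaled process.

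Finally, I would convert this $1/\mu$ scaling of the interval lengths into the claimed scaling of the rate parameters. A contact duration $\xi\sim\mathrm{Exp}(\lambda^C_{i,j})$ becomes $\xi/\mu$ after scaling, and for any $X\sim\mathrm{Exp}(\lambda)$ one has $\Pr[X/\mu\le t]=\Pr[X\le\mu t]=1-e^{-\lambda\mu t}$, so $X/\mu\sim\mathrm{Exp}(\mu\lambda)$; thus the contact parameter becomes $\mu\lambda^C_{i,j}$, and the identical argument applied to the inter-contact durations gives $\mu\lambda^I_{i,j}$. The main obstacle is not this last computation but the justification of the time-reparameterization step itself: it presumes that changing the speed leaves the spatial geometry of the trajectories unchanged and alters only the rate at which they are traced (in particular, that there are no speed-independent pauses). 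I would therefore state this invariance explicitly as the standing premise under which the lemma holds.
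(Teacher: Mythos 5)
Your proof is correct and takes essentially the same route as the paper's: both fix the geometric trajectories and argue that scaling every speed by $\mu$ compresses the time axis by $1/\mu$ (the paper phrases this via the curve integral $\int_{L_i} \frac{dz}{v_i(z)}$, you via the reparameterization $\mathbf{r}_i(\mu t)$), so each contact and inter-contact duration scales by $1/\mu$ and the exponential parameters by $\mu$. If anything your final step is slightly more careful---you verify that $X/\mu \sim \mathrm{Exp}(\mu\lambda)$ at the level of distributions and state the path-invariance premise explicitly, whereas the paper only rescales the means and relies on the model's exponential assumption---but this is a refinement, not a different argument.
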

\begin{proof}
The time for user $i$ to move along a certain path $L_i$ can be given as a curve integral $\int_{L_i} \frac{dz}{v_i(z)}$, where $v_i(z)$ is the speed of user $i$ when passing by a point $z$ on the path $L_i$. When the speed of user $i$ changes by $\mu$ times, the time for moving along the path $L_i$ changes to $\int_{L_i} \frac{dz}{\mu v_i(z)}=\frac{1}{\mu}\int_{L_i} \frac{dz}{ v_i(z)}$, which is scaled by $\frac{1}{\mu}$ times. In each contact or inter-contact duration, users $i$ and $j$ move along certain paths. When all the user speeds change by $\mu$ times, each contact or inter-contact duration changes by $\frac{1}{\mu}$ times, and thus, the average values also change by $\frac{1}{\mu}$ times. Since the contact and inter-contact durations are assumed to be exponentially distributed with means $\frac{1}{\lambda_{i,j}^C}$ and $\frac{1}{\lambda_{i,j}^I}$, respectively, the parameters $\lambda_{i,j}^C$ and $\lambda_{i,j}^I$ are scaled by $\mu$ times.
\end{proof}
Considering that a larger $\mu$ means that users are moving faster, in the following, we will investigate how changing  $\mu$ will affect the data offloading ratio. For simplicity, we assume that the transmission rate is a constant, and will not change with the user speed. This is reasonable in the low-to-medium mobility regime. 
Firstly, when the user speed changes by $\mu$ times, we rewrite the expectation and variance of communication duration as in Lemma \ref{evv}.
\begin{lem} \label{evv}
	When the user speed changes by $\mu$ times, the expectation of the communication duration when user $i \in \mathcal{S}$ requests file $f \in \mathcal{F}$ is the same as (\ref{expectation_t}), and the corresponding variance is given as	
	\begin{equation} \label{v_ts}
	\mathrm{Var}[\tau^c_{i,f}(\mu)]= 2 a_0 \sum \limits_{\mathbf{Z} \in \{0,x_{l,f}\}^{N_f} \backslash \mathbf{0}} \frac{a_{\mathbf{Z}}}{\mu\kappa_{\mathbf{Z}}} \left( \tau_0 -\frac{1}{\mu\kappa_{\mathbf{Z}}} +\frac{1}{\mu\kappa_{\mathbf{Z}}} 
	\exp(-\mu\kappa_{\mathbf{Z}}\tau_0 ) \right),
	\end{equation}
	where $\mathbf{Z}$ is an $N_f$-ary vector, with $z_l$ as the $l$-th element of $\mathbf{Z}$, and
	\begin{equation} \label{ax}
	\begin{cases}
	a_0=\prod \limits_{j \in \mathcal{S}, x_{j,f}=1} \left(p_{i,j}^I \right)^2, \\
	a_{\mathbf{Z}}=\prod \limits_{l=1}^{N_f} \left(\frac{\lambda_{i,l}^I}{\lambda_{i,l}^C} \right)^{z_l}, \\
	\kappa_{\mathbf{Z}}=\sum \limits_{l=1}^{N_f} z_l (\lambda_{i,l}^C+\lambda_{i,l}^I).
	\end{cases}
	\end{equation}
\end{lem}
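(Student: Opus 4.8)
The plan is to derive (\ref{v_ts}) directly from the variance expression (\ref{var_t}) of Lemma \ref{ev}, after substituting the speed-scaled parameters supplied by Lemma \ref{speed}. The first observation is that $p_{i,j}^I=\frac{\lambda_{i,j}^C}{\lambda_{i,j}^C+\lambda_{i,j}^I}$ is invariant under the scaling $(\lambda_{i,j}^C,\lambda_{i,j}^I)\mapsto(\mu\lambda_{i,j}^C,\mu\lambda_{i,j}^I)$, since the common factor $\mu$ cancels in the ratio, whereas the sum appearing in the exponent of (\ref{var_t}) scales as $\lambda_{i,j}^C+\lambda_{i,j}^I\mapsto\mu(\lambda_{i,j}^C+\lambda_{i,j}^I)$. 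Because the expectation in (\ref{expectation_t}) depends on the parameters only through $p_{i,j}^I$, it is unchanged, which proves the first claim. The remaining work is entirely about the variance.

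For the variance, I would first rewrite each factor of the product in (\ref{var_t}) in a form that isolates the speed-dependent exponential. Using $1-p_{i,j}^I=\frac{\lambda_{i,j}^I}{\lambda_{i,j}^C+\lambda_{i,j}^I}$, each factor becomes
\[
p_{i,j}^I\Big[p_{i,j}^I+(1-p_{i,j}^I)e^{-u\mu(\lambda_{i,j}^C+\lambda_{i,j}^I)}\Big]=(p_{i,j}^I)^2\Big[1+\tfrac{\lambda_{i,j}^I}{\lambda_{i,j}^C}e^{-u\mu(\lambda_{i,j}^C+\lambda_{i,j}^I)}\Big],
\]
so the product over $\{j:x_{j,f}=1\}$ factors as $a_0$ times $\prod_{j:x_{j,f}=1}\big[1+\frac{\lambda_{i,j}^I}{\lambda_{i,j}^C}e^{-u\mu(\lambda_{i,j}^C+\lambda_{i,j}^I)}\big]$, with $a_0=\prod_{j:x_{j,f}=1}(p_{i,j}^I)^2$ exactly as in (\ref{ax}).

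Next I would expand this product of two-term factors into a sum over subsets of $\{j:x_{j,f}=1\}$. Encoding each subset by the vector $\mathbf{Z}\in\{0,x_{l,f}\}^{N_f}$ (so $z_l=1$ selects the exponential factor for index $l$, which is possible only when $x_{l,f}=1$), each selected factor contributes $\frac{\lambda_{i,l}^I}{\lambda_{i,l}^C}$ to the coefficient and $e^{-u\mu(\lambda_{i,l}^C+\lambda_{i,l}^I)}$ to the exponential; collecting terms reproduces exactly the coefficient $a_{\mathbf{Z}}=\prod_l(\frac{\lambda_{i,l}^I}{\lambda_{i,l}^C})^{z_l}$ and the combined rate $\kappa_{\mathbf{Z}}=\sum_l z_l(\lambda_{i,l}^C+\lambda_{i,l}^I)$ of (\ref{ax}), so the product equals $\sum_{\mathbf{Z}}a_{\mathbf{Z}}e^{-u\mu\kappa_{\mathbf{Z}}}$. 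Interchanging this finite sum with the integral and evaluating the elementary integral $\int_0^{\tau_0}(\tau_0-u)e^{-\mu\kappa_{\mathbf{Z}}u}\,du=\frac{1}{\mu\kappa_{\mathbf{Z}}}\big(\tau_0-\frac{1}{\mu\kappa_{\mathbf{Z}}}+\frac{1}{\mu\kappa_{\mathbf{Z}}}e^{-\mu\kappa_{\mathbf{Z}}\tau_0}\big)$ for each $\mathbf{Z}\neq\mathbf{0}$ produces the summands in (\ref{v_ts}).

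The step I expect to require the most care is handling the $\mathbf{Z}=\mathbf{0}$ term separately: for it the exponential is $1$ and $\int_0^{\tau_0}(\tau_0-u)\,du=\tau_0^2/2$, so after the prefactor $2a_0$ it contributes $a_0\tau_0^2$, which is precisely the $-\tau_0^2\prod_{j:x_{j,f}=1}(p_{i,j}^I)^2=-a_0\tau_0^2$ term subtracted in (\ref{var_t}). These cancel, which is exactly why the sum in (\ref{v_ts}) runs over $\mathbf{Z}\in\{0,x_{l,f}\}^{N_f}\backslash\mathbf{0}$ rather than over all $\mathbf{Z}$. The only genuine obstacle is thus the combinatorial bookkeeping of the subset expansion and checking that it reproduces the stated $a_{\mathbf{Z}}$ and $\kappa_{\mathbf{Z}}$; the integral evaluation and the $\mathbf{0}$-term cancellation are routine once that expansion is in place.
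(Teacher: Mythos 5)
Your proposal is correct and follows essentially the same route as the paper's proof in Appendix C: rewrite each factor of the product in (\ref{var_t}) as $\left(p_{i,j}^I\right)^2\left[1+\frac{\lambda_{i,j}^I}{\lambda_{i,j}^C}e^{-u(\lambda_{i,j}^C+\lambda_{i,j}^I)}\right]$, expand the product over subsets to obtain $a_0\sum_{\mathbf{Z}}a_{\mathbf{Z}}e^{-u\kappa_{\mathbf{Z}}}$, integrate termwise, and observe that the $\mathbf{Z}=\mathbf{0}$ contribution cancels the subtracted $\tau_0^2 a_0$ term, with the invariance of $p_{i,j}^I$ under the scaling giving the unchanged expectation. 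The only (immaterial) difference is ordering: the paper carries out the expansion and integration at $\mu=1$ and then scales $\kappa_{\mathbf{Z}}\mapsto\mu\kappa_{\mathbf{Z}}$ at the end, whereas you substitute the scaled parameters from the outset.
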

\begin{proof}
	See Appendix C.
\end{proof}
\noindent Then, the effect of user speed on the communication duration is shown in Lemma \ref{s_t} .
\begin{lem} \label{s_t}
	 When $\mu$ increases, the expectation of the communication duration for user $i \in \mathcal{S}$ requesting file $f \in \mathcal{F}$ (i.e., $\mathbb{E}[\tau^c_{i,f}(\mu)]$) does not change, and the corresponding variance (i.e., $\mathrm{Var}[\tau^c_{i,f}(\mu)]$) decreases, if there is at least one user caching file $f$ has contacts with user $i$. Accordingly, the parameter $\alpha_{i,f}$ of the beta distribution increases.
\end{lem}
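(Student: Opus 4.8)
The plan is to treat the three assertions in order of their logical dependence: the expectation first (which is immediate), then the variance (the substantive part), and finally the monotonicity of $\alpha_{i,f}$ (a corollary of the first two). For the expectation I would simply invoke Lemma \ref{evv}: since $p_{i,j}^I = \frac{\lambda_{i,j}^C}{\lambda_{i,j}^C + \lambda_{i,j}^I}$ is invariant under the common scaling $(\lambda^C,\lambda^I) \mapsto (\mu\lambda^C,\mu\lambda^I)$ of Lemma \ref{speed}, the expression (\ref{expectation_t}) carries no dependence on $\mu$, so $\mathbb{E}[\tau^c_{i,f}(\mu)]$ is constant. I would also record here, for later use, that $0 < \mathbb{E}[\tau^c_{i,f}] < \tau_0$ whenever some caching user is in contact, because then $\prod_{j} p_{i,j}^I \in (0,1)$.

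For the variance I would work from the closed form (\ref{v_ts}). The coefficients $a_0$ and $a_{\mathbf{Z}}$ in (\ref{ax}) depend only on the invariant ratios $\lambda_{i,l}^I/\lambda_{i,l}^C$ and on $p_{i,j}^I$, hence are independent of $\mu$ and nonnegative; moreover $a_0 > 0$, and for every surviving term $\kappa_{\mathbf{Z}} > 0$. Any $\mathbf{Z}$ that selects ($z_l = 1$) a contact-free neighbour contributes $a_{\mathbf{Z}} = 0$ since the factor $\lambda_{i,l}^I/\lambda_{i,l}^C$ vanishes there, so the hypothesis that at least one caching neighbour is in contact guarantees a strictly positive term in the sum. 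The whole $\mu$-dependence is thus carried by the factor $h(\mu\kappa_{\mathbf{Z}})$ with $h(s) = \frac{\tau_0}{s} - \frac{1}{s^2} + \frac{1}{s^2}e^{-s\tau_0}$; since $s = \mu\kappa_{\mathbf{Z}}$ increases with $\mu$, it suffices to show that $h$ is strictly decreasing on $(0,\infty)$.

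This monotonicity is the heart of the argument and where I expect the only real obstacle, since $h$ is a difference of competing positive terms whose sign is not visible by inspection. I would substitute $w = s\tau_0$ and differentiate to obtain $h'(s) = s^{-3}\phi(w)$ with the auxiliary function $\phi(w) = 2(1 - e^{-w}) - w(1 + e^{-w})$, so that $h'(s)$ and $\phi(w)$ share the same sign. The crux is then the chain $\phi(0) = 0$, $\phi'(0) = 0$, and $\phi''(w) = -w e^{-w} < 0$ for $w > 0$; integrating back gives $\phi'(w) < 0$ and hence $\phi(w) < 0$ on $(0,\infty)$. Therefore $h'(s) < 0$ for all $s > 0$, every summand of (\ref{v_ts}) strictly decreases with $\mu$, and so does $\mathrm{Var}[\tau^c_{i,f}(\mu)]$.

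Finally, the conclusion for $\alpha_{i,f}$ follows mechanically from (\ref{beta_p}). With $\mathbb{E}[\tau^c_{i,f}]$ and $\tau_0$ fixed in $\mu$, I would write $\alpha_{i,f} = \frac{A}{\mathrm{Var}[\tau^c_{i,f}]} - B$ with $A = \frac{\mathbb{E}[\tau^c_{i,f}]^2(\tau_0 - \mathbb{E}[\tau^c_{i,f}])}{\tau_0}$ and $B = \frac{\mathbb{E}[\tau^c_{i,f}]}{\tau_0}$ both constant, where positivity $A > 0$ uses the bound $\mathbb{E}[\tau^c_{i,f}] < \tau_0$ noted above. Since the variance decreases while $A > 0$, the term $A/\mathrm{Var}[\tau^c_{i,f}]$ increases, and hence $\alpha_{i,f}$ increases, which completes the proof.
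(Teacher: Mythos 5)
Your proof is correct and takes essentially the same route as the paper: your auxiliary function $\phi(w)=2(1-e^{-w})-w(1+e^{-w})$ is exactly the paper's $\mathcal{A}_1$, and both arguments reduce the variance claim to the strict negativity of this function on $(0,\infty)$ (the paper by differentiating the variance in $\mu$, you by differentiating each summand in $s=\mu\kappa_{\mathbf{Z}}$, which is equivalent by the chain rule) before concluding via the monotonicity of $\alpha_{i,f}$ in the variance. The only cosmetic difference is how the sign is certified --- you use $\phi''(w)=-we^{-w}<0$ with $\phi(0)=\phi'(0)=0$, while the paper bounds the same first derivative $\mathcal{A}_1'(x)=(1+x)e^{-x}-1$ directly via $e^{-x}<1/(1+x)$.
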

\begin{proof}
	See Appendix D.
\end{proof}
\begin{remark}
	Intuitively, when the users move faster, they have more contacts with each other within a certain time period $\tau_0 $, while the duration of each contact decreases. That is why the variance of communication durations decreases and the expected value does not change.
\end{remark}

\subsection{Effect of Moving Speed on Data Offloading Ratio}
In the following, we evaluate the relationship between $\alpha_{i,f}$ and the data offloading ratio when user $i$ requests file $f$ that is not in its own cache (i.e., $\mathcal{R}_{i,f}$ in (\ref{ratio_g})) in Lemma \ref{s_g}.
\begin{lem} \label{s_g}
	When user $i \in \mathcal{S}$ requests file $f \in \mathcal{F}$ and $x_{i,f}=0$, the data offloading ratio (i.e., $\mathcal{R}_{i,f}$) increases with parameter $\alpha_{i,f}$.
\end{lem}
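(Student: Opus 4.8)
The plan is to recast the offloading ratio as the expectation of a \emph{fixed} concave payoff under a Beta law whose mean is pinned down, and then exploit the fact that raising $\alpha_{i,f}$ only makes that law less dispersed. First I would note that, by \eqref{ifratio} together with the Beta approximation $\tau^c_{i,f}/\tau_0\approx Z_{i,f}\sim\mathrm{Beta}(\alpha_{i,f},\beta_{i,f})$, the quantity in \eqref{ratio_g} is exactly
\begin{equation}
\mathcal{R}_{i,f}=\mathbb{E}\big[g(Z_{i,f})\big],\qquad g(z)=\min\!\big(z/q,\,1\big),\quad q=\tfrac{F}{\tau_0 r_0}\in(0,1),
\end{equation}
where $q<1$ because the model assumes $\tau_0>F/r_0$, and $g$ is concave and nondecreasing. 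Integrating by parts (using $g(0)=0$, $g(1)=1$, and $g'(z)=1/q$ on $(0,q)$ and $0$ on $(q,1)$) rewrites this as
\begin{equation}
\mathcal{R}_{i,f}=1-\frac1q\int_0^q F_{\alpha_{i,f}}(z)\,dz,
\end{equation}
where $F_\alpha$ is the CDF of $\mathrm{Beta}(\alpha,c\alpha)$ with $c=\tfrac{\tau_0-\mathbb{E}[\tau^c_{i,f}]}{\mathbb{E}[\tau^c_{i,f}]}$. The key structural point is that, by Lemma \ref{s_t}, $\mathbb{E}[\tau^c_{i,f}]$ is held fixed, so $c$ is fixed; hence raising $\alpha_{i,f}$ scales both shape parameters proportionally and keeps the mean $m=\tfrac{1}{1+c}=\mathbb{E}[Z_{i,f}]$ constant. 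It therefore suffices to show that $\int_0^q F_\alpha(z)\,dz$ is decreasing in $\alpha$.

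Next I would establish this by a single-crossing (variability-order) argument rather than by differentiating the incomplete beta function in \eqref{ratio_g}, which is analytically unpleasant. For $\alpha_2>\alpha_1$ the density ratio is
\begin{equation}
\frac{f_{\alpha_2}(z)}{f_{\alpha_1}(z)}=A\,z^{\,m\Delta}(1-z)^{\,(1-m)\Delta},\qquad \Delta=\alpha_2-\alpha_1>0,\ A>0,
\end{equation}
which vanishes at both endpoints and is strictly unimodal on $(0,1)$ with peak at $z=m$. Hence $f_{\alpha_2}-f_{\alpha_1}$ changes sign exactly twice, in the order $(-,+,-)$, since the unimodal ratio must exceed one on a central interval (both densities integrating to one rules out $f_{\alpha_2}\le f_{\alpha_1}$ everywhere). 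Integrating, $F_{\alpha_2}-F_{\alpha_1}$ starts at $0$, dips below $0$, crosses zero once, rises above $0$, and returns to $0$ at $z=1$; the return to zero uses $\int_0^1 F_\alpha\,dz=1-m$, which is the same for both laws because their means coincide. Consequently $F_{\alpha_1}-F_{\alpha_2}$ has a single sign change, from $+$ to $-$.

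Finally, I would set $G(t)=\int_0^t\big(F_{\alpha_1}(z)-F_{\alpha_2}(z)\big)\,dz$, so that $G(0)=G(1)=0$ and $G'=F_{\alpha_1}-F_{\alpha_2}$ is first nonnegative and then nonpositive; thus $G$ rises and then falls and stays nonnegative on $[0,1]$. In particular $G(q)\ge 0$, i.e.\ $\int_0^q F_{\alpha_2}\le\int_0^q F_{\alpha_1}$, which gives $\mathcal{R}_{i,f}(\alpha_2)\ge\mathcal{R}_{i,f}(\alpha_1)$ and proves the monotonicity. I expect the main obstacle to be the density-crossing step, namely verifying that the crossings are \emph{exactly} two and in the order $(-,+,-)$; this rests entirely on the unimodality of $z^{m\Delta}(1-z)^{(1-m)\Delta}$ and on the equal-mean bookkeeping that forces the CDF difference to vanish at both ends. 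I would note in passing that this is precisely the statement $Z_{\alpha_2}\le_{\mathrm{cx}}Z_{\alpha_1}$ in the convex (variability) order, so an alternative route is to invoke that a concave functional is monotone along a fixed-mean Beta family; the self-contained crossing argument, however, avoids any appeal to stochastic-order machinery.
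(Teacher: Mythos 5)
Your proof is correct, but it takes a genuinely different route from the paper's. The paper works directly with the closed form (\ref{ratio_g}): setting $q=\frac{F}{\tau_0 r_0}$ and $y=\frac{\tau_0-\mathbb{E}[\tau^c_{i,f}]}{\mathbb{E}[\tau^c_{i,f}]}$, it differentiates $g(\alpha)=1-\mathcal{R}_{i,f}$ with respect to $\alpha$ (producing digamma terms $D(y,\alpha)=\psi(\alpha)+y\psi(y\alpha)-(1+y)\psi[(1+y)\alpha]$) and then determines the sign of $g'(\alpha)$ by a delicate sign analysis in $q$: auxiliary functions $\mathcal{A}_2(q)$, $\mathcal{A}_2'(q)$, $\mathcal{A}_2''(q)$, $\mathcal{A}_3(q)$ are shown to have prescribed sign patterns on $(0,1)$ (negative, then positive, then negative, etc.), with boundary values at $q\to 0^+$ and $q=1$ closing the argument to give $g'(\alpha)<0$. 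You avoid differentiation in $\alpha$ entirely: after the same integration by parts that reduces the payoff to $\mathcal{R}_{i,f}=1-\frac{1}{q}\int_0^q F_\alpha(z)\,dz$, you run the classical cut (two-crossing) criterion on the fixed-mean family $\mathrm{Beta}(\alpha,c\alpha)$ — the density difference has sign pattern $(-,+,-)$, so the CDF difference has a single crossing, and the equal-mean identity $\int_0^1 F_\alpha = 1-m$ forces $G(1)=0$, whence $G(q)\geqslant 0$. This is exactly the convex-order statement you name at the end, and it is arguably cleaner and strictly more general: it proves monotonicity simultaneously for all $q\in(0,1)$ and for any concave nondecreasing payoff, whereas the paper's computation is tailored to $\min(z/q,1)$. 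The paper's route delivers strict monotonicity ($g'(\alpha)<0$) explicitly; yours gives it too with one extra remark, since $G'$ is strictly positive on an initial interval, so $G>0$ throughout $(0,1)$. Two cosmetic slips you should fix: with your definition $\Delta=\alpha_2-\alpha_1$, the ratio exponents are $\Delta$ and $c\Delta$, not $m\Delta$ and $(1-m)\Delta$ (those are correct only if $\Delta$ denotes the difference of the sums $\alpha+\beta$); either bookkeeping puts the peak at $z=m$, so nothing downstream changes. And the vanishing of $F_{\alpha_2}-F_{\alpha_1}$ at $z=1$ is simply $F(1)=1$ for both laws — the equal-mean integral identity is not needed there, but is exactly what you (correctly) invoke to get $G(1)=0$.
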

\begin{proof}
	See Appendix E.
\end{proof}
\noindent Based on Lemmas \ref{s_t} and \ref{s_g}, we specify the effect of the user speed in Proposition \ref{s_d}.
\begin{prop}\label{s_d}
	If the transmission rate keeps unchanged, the data offloading ratio increases with the user moving speed.
\end{prop}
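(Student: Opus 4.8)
The plan is to prove the claim by composing the two monotonicity results already established, namely Lemmas \ref{s_t} and \ref{s_g}, and then pushing the resulting monotonicity of each per-file term through the weighted sum that defines $\mathcal{R}$ in (\ref{ratio}). The key observation is that a change of the common scaling factor $\mu$ is the only way the moving speed enters the analysis: by Lemma \ref{speed} a speed change by $\mu$ times rescales every pair $(\lambda^C_{i,j},\lambda^I_{i,j})$ to $(\mu\lambda^C_{i,j},\mu\lambda^I_{i,j})$, so it suffices to show that $\mathcal{R}$ is increasing in $\mu$.

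First I would fix an arbitrary user $i$ and file $f$ and classify the term $p^r_{i,f}[x_{i,f}+(1-x_{i,f})\mathcal{R}_{i,f}]$ in (\ref{ratio}) into three cases. When $x_{i,f}=1$ the term equals $p^r_{i,f}$ and does not depend on $\mu$. When $x_{i,f}=0$ but no user caching file $f$ has a contact with user $i$, the approximate offloading ratio satisfies $\mathcal{R}_{i,f}^a=0$ for every $\mu$, so the term is again constant. In the remaining case, $x_{i,f}=0$ and at least one user caching $f$ is in contact with user $i$; here Lemma \ref{s_t} guarantees that increasing $\mu$ leaves $\mathbb{E}[\tau^c_{i,f}(\mu)]$ unchanged while strictly decreasing $\mathrm{Var}[\tau^c_{i,f}(\mu)]$, and hence strictly increases the shape parameter $\alpha_{i,f}$.

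Next I would invoke Lemma \ref{s_g}, which states that $\mathcal{R}_{i,f}$ is increasing in $\alpha_{i,f}$. Chaining this with the previous step shows that in the third case the term $p^r_{i,f}(1-x_{i,f})\mathcal{R}_{i,f}$ is strictly increasing in $\mu$, while in the first two cases it is constant. Since every term of the sum in (\ref{ratio}) is therefore non-decreasing in $\mu$, the average $\mathcal{R}$ is non-decreasing. Strict monotonicity then follows from the standing assumption of Section IV: there exists a contacting pair $i,j$ with $0<\lambda^I_{i,j},\lambda^C_{i,j}<\infty$ and a file $f$ with $x_{i,f}=1$, $x_{j,f}=0$, which places the pair $(j,f)$ in the third case and thus contributes a strictly increasing term with positive weight $p^r_{j,f}/N_u$.

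The main obstacle is essentially bookkeeping rather than a new estimate: almost all of the analytic work is carried by Lemmas \ref{s_t} and \ref{s_g}, so the delicate points are (i) confirming that $\mu$ affects $\mathcal{R}$ only through the $\alpha_{i,f}$'s, which relies on $\mathbb{E}[\tau^c_{i,f}]$ being $\mu$-invariant as asserted in Lemma \ref{s_t}, and (ii) making sure the degenerate terms (own-cached files and files with no contacting cacher) are correctly isolated so that they neither break monotonicity nor are mistakenly counted as strictly increasing. Verifying that at least one term is strictly increasing, using the non-degeneracy assumption, is what upgrades the conclusion from non-decreasing to increasing.
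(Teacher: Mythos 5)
Your proposal is correct and takes essentially the same route as the paper's own proof: it composes Lemmas \ref{s_t} and \ref{s_g} term by term in (\ref{ratio}), isolates the two degenerate cases ($x_{i,f}=1$, and $x_{i,f}=0$ with no contacting cacher, where $\mathcal{R}^a_{i,f}=0$) as $\mu$-invariant, and upgrades non-decreasing to increasing via the standing assumption of Section IV that some contacting pair has differing cache contents. Your version is merely a bit more explicit than the paper about invoking Lemma \ref{speed} to reduce a speed change to the $\mu$-scaling of $(\lambda^C_{i,j},\lambda^I_{i,j})$ and about the positive weight $p^r_{j,f}/N_u$ carried by the strictly increasing term.
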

\begin{proof}
	See Appendix F.
\end{proof}
\begin{remark}
	The result in Proposition \ref{s_d} is valid for any caching strategy, only excluding the case that all the user pairs with contacts have the same cache contents. Although a higher mobility shortens the contact durations, it provides more opportunities for the users to share data, and improves the overall data offloading ratio. This can be regarded as a type of \emph{diversity gain}. Thus, it is important to take advantage of user mobility in the cache-assisted D2D network.
\end{remark}

\section{Mobility-Aware Caching Strategy}
We have derived the expression for the data offloading ratio and theoretically evaluated the effect of user mobility. These analytical results are also useful for designing mobility-aware caching strategies. Recently, some works started to develop cache placement strategies for D2D networks by taking advantage of user mobility information \cite{mobilitycaching,markovmobility}. However, it was assumed that a complete file or segment can be delivered when two users are in contact, and the variation in the contact duration was ignored. In this section, we formulate a cache placement problem and propose a mobility-aware caching strategy using both the contact and inter-contact information.

\subsection{Problem Formulation}

We will develop a caching strategy, which takes advantage of both the contact and inter-contact information to improve the data offloading ratio. With a higher data offloading ratio, more D2D links can be established, and thus, the spectrum efficiency can be improved. Specifically, we consider that the contact and inter-contact parameters can be estimated via historical data. The mobility-aware cache placement problem is formulated as
\begin{align}
\max \limits_{x_{i,f}} \quad & \mathcal{R}, \label{problem} \\
\text{s.t.} \quad & F \sum \limits_{f \in \mathcal{F}} x_{i,f} \leqslant C, i \in \mathcal{S}, \tag{\ref{problem}a} \\
& x_{i,f} \in \{0,1\}, i \in \mathcal{S}, \tag{\ref{problem}b}
\end{align}
where constraint (\ref{problem}a) implies a limited cache capacity and constraint (\ref{problem}b) means that each file is fully cached or not cached at all.

\subsection{Submodular Functions and Matroid Constraints}
Our proposed algorithm is based on submodular optimization. As a typical kind of combinational optimization problems, submodular maximization has been extensively investigated \cite{submodularbook,submodular2,submodular}. It has a wide range of applications, including the max-$k$-cover problem and the max-cut problem \cite{1e}. Moreover, in wireless networks, submodular maximization over a matroid constraint has been utilized to design network coding \cite{networkcoding} and femto-caching \cite{femtocaching2}. For completeness, we will present some necessary definitions and properties related to the submodular set function and the matroid constraint. Please refer to \cite{submodular2} for more details.
\begin{defn}
	Let $S$ be a finite ground set, and a function $h: 2^{S} \to \mathbb{R}$ is a \emph{submodular set function} if $h(A)+h(B) \geqslant h(A \cup B) + h(A \cap B)$, $\forall A \subseteq S \text{ and } \forall B \subseteq S$.
\end{defn}
\begin{prop} \label{secondorder}
	Let $A \subset S \text{ and } j,k \in S-A, j \ne k$, and a function $h: 2^{S} \to \mathbb{R}$ is a monotone submodular function if
	$h(A \cup \{k\}) - h(A) \geqslant h(A \cup \{j,k\}) - h(A \cup\{j\}) \geqslant 0.$
\end{prop}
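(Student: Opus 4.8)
The plan is to establish the two consequences separately: the stated pointwise inequalities, assumed to hold for every $A \subset S$ and every pair of distinct $j,k \in S-A$, imply first that $h$ is monotone and second that $h$ satisfies the defining inequality $h(A)+h(B) \geqslant h(A\cup B)+h(A \cap B)$ of a submodular set function. The first inequality in the hypothesis is the \emph{single-element diminishing-returns} property, namely that the marginal value of adding $k$ can only shrink when a further element $j$ is already present; the second inequality supplies the non-negativity of marginal gains that drives monotonicity.

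For monotonicity, I would first observe that chaining the two hypotheses gives $h(A \cup \{k\}) - h(A) \geqslant 0$ directly, provided $S-A$ contains some $j \neq k$. The only degenerate case is $A = S\setminus\{k\}$, where no such $j$ exists; here I would pick any $j \in A$ (assuming $|S|\geqslant 2$), set $A' = A \setminus \{j\}$, and apply the second hypothesis to $A'$ with this $j$ and $k$, yielding $h(S) - h(S\setminus\{k\}) \geqslant 0$. Once every single-element marginal gain is known to be non-negative, monotonicity $h(X) \leqslant h(Y)$ for $X \subseteq Y$ follows by telescoping along any chain that inserts the elements of $Y \setminus X$ one at a time.

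For submodularity I would use two successive telescoping arguments. First I lift the single-element hypothesis to the \emph{general} diminishing-returns inequality $h(X\cup\{k\}) - h(X) \geqslant h(Y\cup\{k\}) - h(Y)$ for arbitrary $X \subseteq Y$ with $k \notin Y$: writing $Y\setminus X = \{j_1,\dots,j_m\}$ and setting $Y_i = X \cup \{j_1,\dots,j_i\}$, the first hypothesis applied with base set $Y_{i-1}$ shows the marginal value of $k$ is non-increasing as $Y_{i-1}$ grows to $Y_i$, and chaining from $i=1$ to $m$ gives the claim. Second, for arbitrary $A,B \subseteq S$, I order $A\setminus B = \{a_1,\dots,a_p\}$ and expand the telescoping sum $h(A\cup B) - h(B) = \sum_{i=1}^{p}\big[h(B\cup\{a_1,\dots,a_i\}) - h(B\cup\{a_1,\dots,a_{i-1}\})\big]$. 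Each increment is the marginal of $a_i$ over a superset of $(A\cap B)\cup\{a_1,\dots,a_{i-1}\}$, so the general diminishing-returns inequality bounds it above by the marginal of $a_i$ over $(A\cap B)\cup\{a_1,\dots,a_{i-1}\}$; summing these bounds collapses the right-hand side to $h(A) - h(A\cap B)$, and rearranging yields exactly $h(A)+h(B) \geqslant h(A\cup B) + h(A\cap B)$.

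The routine parts are the two telescoping expansions; the one point that requires care is the bookkeeping that keeps the intermediate base sets nested so that the diminishing-returns inequality is legitimately applicable at each step, together with the degenerate monotonicity case above where the hypothesis cannot directly furnish a distinct pair $j,k$. I expect the indexing in the second telescoping step, in particular verifying that $(A\cap B)\cup\{a_1,\dots,a_{i-1}\}$ is contained in $B\cup\{a_1,\dots,a_{i-1}\}$ for every $i$, to be the main thing to state precisely, though it is conceptually straightforward.
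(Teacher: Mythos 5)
Your proposal is correct, but there is nothing in the paper to compare it against: the paper never proves Proposition~\ref{secondorder}. It is quoted as a known characterization of monotone submodular set functions, with the reader referred to \cite{submodular2}, and is subsequently only \emph{used} as the criterion verified in the proof of Lemma~\ref{sub_p}. What you have supplied is the standard textbook derivation of the equivalence between the local exchange property and the lattice inequality, and it checks out. Your monotonicity argument is sound, including the degenerate case $A=S\setminus\{k\}$, where the hypothesis cannot furnish a second element of $S-A$ and you instead peel $j$ off $A$; note that this fix needs $A\neq\varnothing$, i.e.\ $|S|\geqslant 2$, and that this assumption is genuinely necessary rather than cosmetic: for $|S|=1$ the hypothesis is vacuous and, e.g., $h(\varnothing)=1$, $h(S)=0$ satisfies it while failing monotonicity (the submodular inequality itself is trivial there, since any two subsets are nested). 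Your two telescoping steps are the crux and both are valid: lifting the single-element hypothesis along the chain $Y_i=X\cup\{j_1,\dots,j_i\}$ is legitimate because $j_i\notin Y_{i-1}$ and $k\notin Y\supseteq Y_{i-1}$ guarantee the pair $(j_i,k)$ is distinct and outside the base set at every step; and in the second telescoping, $a_i\in A\setminus B$ ensures $a_i\notin B\cup\{a_1,\dots,a_{i-1}\}$, the nesting $(A\cap B)\cup\{a_1,\dots,a_{i-1}\}\subseteq B\cup\{a_1,\dots,a_{i-1}\}$ is immediate, and the upper bounds sum to $h\bigl((A\cap B)\cup(A\setminus B)\bigr)-h(A\cap B)=h(A)-h(A\cap B)$, giving $h(A)+h(B)\geqslant h(A\cup B)+h(A\cap B)$ as required. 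One last point of precision: the proposition as stated asserts only the sufficiency direction (``is a monotone submodular function if''), which is exactly what you prove; the converse is also true but is not claimed, so your proof matches the statement's scope.
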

\noindent Proposition \ref{secondorder} provides an intuitive explanation of the submodular property, i.e.,  the marginal gain of a monotone submodular set function decreases with a larger set. It is also useful when proving submodularity. 
\begin{defn}
	A pair $\mathcal{M}=\{S,\mathcal{I}\}$, where $S$ is a finite ground set and $\mathcal{I}$ is a collection of subsets of $S$,  is called a \emph{matroid} if
	\begin{enumerate}
		\item $\varnothing \in \mathcal{I}$,
		\item If $A \subseteq B \subseteq S$ and $ B \in \mathcal{I}$, then $A \in \mathcal{I}$,
		\item If $A,B \in \mathcal{I}$ and $|B|>|A|$, then $\exists j \in B-A$ such that $A \cup \{j\} \in \mathcal{I}$.
	\end{enumerate}
\end{defn}

\subsection{Problem Reformulation}
In the following, problem (\ref{problem}) will be reformulated as a submodular maximization problem over a matroid constraint. We first define the ground set as $S=\{y_{j,f}|j \in \mathcal{S} \text{ and } f \in \mathcal{F} \}$, and a cache placement can be represented as a subset of $S$. Specifically, for a cache placement $Y \subseteq S$, $y_{j,f} \in Y$ means user $j$ caches file $f$ (i.e., $x_{j,f}=1$), and $y_{j,f} \notin Y$ means user $j$ does not cache file $f$ (i.e., $x_{j,f}=0$). Accordingly, the data offloading ratio can be rewritten as 
\begin{align} \label{ratio_Y}
\mathcal{R}(Y) = \frac{1}{N_u}\sum \limits_{i \in \mathcal{S}} \sum \limits_{f \in \mathcal{F}} p^r_{i,f} \left[ \mathbbm{1}(y_{i,f} \in Y) + \mathbbm{1}(y_{i,f} \notin Y) \mathcal{R}_{i,f}(Y) \right],
\end{align}
where 
\begin{equation} \label{rifY}
\mathcal{R}_{i,f}(Y)=\mathbb{E}_{\tau^c_{i,f}}[\min(r_0 \tau^c_{i,f}(Y))/F,1].
\end{equation}
Lemma \ref{sub_p} shows that $\mathcal{R}(Y)$ is a monotone submodular function by verifying Proposition \ref{secondorder}.
\begin{lem} \label{sub_p}
	$\mathcal{R}(Y)$ in (\ref{ratio_Y}) is a monotone submodular set function on the ground set $S$.
\end{lem}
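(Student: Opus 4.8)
The plan is to invoke the second-order characterization of monotone submodularity in Proposition~\ref{secondorder}, and to exploit the linearity of $\mathcal{R}(Y)$ in its per-request contributions. Writing $\mathcal{R}(Y) = \frac{1}{N_u}\sum_{i\in\mathcal{S}}\sum_{f\in\mathcal{F}} p^r_{i,f}\, g_{i,f}(Y)$ with $g_{i,f}(Y) = \mathbbm{1}(y_{i,f}\in Y) + \mathbbm{1}(y_{i,f}\notin Y)\,\mathcal{R}_{i,f}(Y)$, and noting that a non-negative linear combination of monotone submodular functions is again monotone submodular, it suffices to prove that each $g_{i,f}$ is monotone submodular on $S$. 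Since $g_{i,f}$ depends only on the coordinates $\{y_{j,f}: j\in\mathcal{S}\}$ attached to the same file $f$, every element $y_{j,f'}$ with $f'\neq f$ has zero marginal effect on $g_{i,f}$ and may be ignored in the verification.

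The core step is to establish that $\mathcal{R}_{i,f}(Y)$ is itself monotone submodular in the caching set $\{y_{j,f}: j\neq i\}$. First I would fix a realization of the alternating renewal processes $\{H_{i,j}(t)\}$. Using the identity $H^f_i(t)=1-\prod_{j: y_{j,f}\in Y}[1-H_{i,j}(t)]$ from Definition~\ref{hif}, the communication duration $\tau^c_{i,f}(Y)=\int_{T^r_{i,f}}^{T^r_{i,f}+\tau_0} H^f_i(t)\,dt$ equals the Lebesgue measure of the union $\bigcup_{j: y_{j,f}\in Y}\{t: H_{i,j}(t)=1\}$ over the window of length $\tau_0$. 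This is a coverage function, a textbook monotone submodular set function: adding one more user's contact intervals increases the covered measure by its own measure minus the overlap with the already-covered set, and this overlap only grows as the set enlarges, so the marginal gain is non-negative and diminishing. Next, $\phi(x)=\min(r_0 x/F,1)$ is non-decreasing and concave, and the composition of a non-decreasing concave scalar function with a monotone submodular set function is again monotone submodular; this follows from the second-order condition by bounding $\phi(a')-\phi(a)\geq \phi(a+\delta)-\phi(a)\geq\phi(b+\delta)-\phi(b)$, where $\delta$ is the smaller of the two increments and $a\leq b$. Finally, taking the expectation over the renewal processes, i.e. $\mathcal{R}_{i,f}(Y)=\mathbb{E}[\phi(\tau^c_{i,f}(Y))]$, preserves both monotonicity and submodularity, since these are defined by inequalities that survive averaging.

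With the monotone submodularity of $\mathcal{R}_{i,f}$ in hand, I would finish by verifying the second-order condition of Proposition~\ref{secondorder} for $g_{i,f}$ through a short case analysis on the two added elements $k,\ell\in S-A$. When neither $k$ nor $\ell$ is the self-cache element $y_{i,f}$ and $y_{i,f}\notin A$, we have $g_{i,f}=\mathcal{R}_{i,f}$ throughout and the claim reduces to submodularity of $\mathcal{R}_{i,f}$; if instead $y_{i,f}\in A$ then $g_{i,f}\equiv 1$ and all marginal gains vanish. When $k=y_{i,f}$, the marginal gain is $1-\mathcal{R}_{i,f}(\cdot)$, which is non-negative and, by monotonicity of $\mathcal{R}_{i,f}$, can only shrink once $\ell$ is added first. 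When $\ell=y_{i,f}$, adding $\ell$ saturates $g_{i,f}$ to $1$, so the later marginal gain of $k$ is zero while the earlier one is non-negative. Each case yields the required diminishing-returns inequality together with non-negativity, establishing monotone submodularity of $g_{i,f}$ and hence of $\mathcal{R}(Y)$.

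I expect the main obstacle to be the core step, namely cleanly arguing submodularity of $\mathcal{R}_{i,f}$: one must justify the coverage-function representation of $\tau^c_{i,f}$ pathwise, apply the concave-composition lemma with careful handling of the two distinct increments, and confirm that expectation commutes with the submodularity inequalities. By comparison, the indicator bookkeeping in $g_{i,f}$ and the reduction via linearity are routine.
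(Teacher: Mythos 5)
Your proof is correct, and its core step takes a genuinely different route from the paper's. The paper also reduces everything to monotone submodularity of $\mathcal{R}_{i,f}(Y)$, but establishes it by direct computation: it introduces the durations $D_1$, $D_2$, $D_2'$ (time in contact with $j_1$, resp.\ $j_2$, while out of contact with all users in $A$, resp.\ $A\cup\{y_{j_1,f}\}$), proves the pointwise overlap inequalities $D_2-D_2'\geqslant 0$ and $D_1+D_2'-D_2\geqslant 0$, and then verifies $\Delta\geqslant 0$ through a five-case analysis on where $\tau^c_{i,f}(A)$ falls relative to $F/r_0$, $F/r_0-D_1$, $F/r_0-D_2$, and $F/r_0-D_1-D_2'$. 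Your argument packages exactly this content abstractly: pathwise, $\tau^c_{i,f}(Y)$ is the Lebesgue measure of $\bigcup_{j:\,y_{j,f}\in Y}\{t: H_{i,j}(t)=1\}$ over the window, hence a coverage function; $\phi(x)=\min(r_0x/F,1)$ is nondecreasing and concave, and your chain $\phi(a+s)-\phi(a)\geqslant\phi(a+\delta)-\phi(a)\geqslant\phi(b+\delta)-\phi(b)$ with $\delta=\min(s,t)$ is valid precisely because \emph{monotonicity} of the inner set function guarantees $a\leqslant b$ and $s\geqslant t\geqslant 0$ (the composition claim is false for non-monotone submodular functions, so it is good that you carry the monotonicity hypothesis throughout; here one could even argue more simply that $\phi$ is a truncation of a linear function and truncation preserves monotone submodularity); finally, expectation preserves the defining inequalities. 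Your route is shorter and generalizes immediately to any concave nondecreasing utility of the communication duration, whereas the paper's explicit case analysis has the side benefit of producing concrete piecewise expressions for the marginal gains. The outer layers differ only cosmetically: you decompose $\mathcal{R}$ into the per-request terms $g_{i,f}$, use closure under nonnegative combinations, and handle the self-cache element $y_{i,f}$ by a three-case check (needing only $\mathcal{R}_{i,f}\leqslant 1$ and monotonicity, both of which hold), while the paper verifies Proposition \ref{secondorder} directly on the aggregate $\mathcal{R}$, splitting the cases $f_1=f_2$ and $f_1\neq f_2$; your observation that $g_{i,f}$ ignores all coordinates $y_{j,f'}$ with $f'\neq f$ plays the role of the paper's $f_1\neq f_2$ case. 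One shared informality, not a gap relative to the paper: both treatments manipulate $\tau^c_{i,f}$ (and your coverage sets) pathwise under the $T^r_{i,f}\to\infty$ limit, which strictly should be read as starting the alternating renewal processes in their stationary distributions.
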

\begin{proof}
	See Appendix E.
\end{proof}
\noindent Then, Lemma \ref{mat_p} rewrites the constraint in problem (\ref{problem}) as a matroid constraint.
\begin{lem} \label{mat_p}
	Let $S_i=\{y_{i,f}| f \in \mathcal{F}\}$, include all files that may be cached at user $i$, the constraint in problem (\ref{problem}) is equivalent to a matroid constraint $Y \in \mathcal{I}$, where
	\begin{equation} \label{motroid_I}
	\mathcal{I}=\left\{Y \subseteq S \Big\vert |Y \cap S_i| \leqslant C/F, \forall i \in \mathcal{S} \right\}.
	\end{equation}
\end{lem}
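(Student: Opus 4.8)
The plan is to recognize $\mathcal{M}=\{S,\mathcal{I}\}$ as a partition matroid and to establish two things: the equivalence of the capacity constraint with membership in $\mathcal{I}$, and the fact that $\mathcal{I}$ satisfies the three matroid axioms. First I would verify the equivalence. Under the correspondence $y_{i,f}\in Y \Leftrightarrow x_{i,f}=1$, the quantity $\sum_{f\in\mathcal{F}} x_{i,f}$ counts exactly the files cached at user $i$, which equals $|Y\cap S_i|$ since $S_i=\{y_{i,f}\mid f\in\mathcal{F}\}$ collects precisely the ground-set elements associated with user $i$. Hence $F\sum_{f\in\mathcal{F}} x_{i,f}\leqslant C$ holds for every $i\in\mathcal{S}$ if and only if $|Y\cap S_i|\leqslant C/F$ for every $i\in\mathcal{S}$, i.e.\ $Y\in\mathcal{I}$.

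Next I would record the structural observation underpinning the whole argument: the family $\{S_i\}_{i\in\mathcal{S}}$ is a partition of $S$, since each element $y_{j,f}$ lies in exactly one block $S_j$. With this, the first two matroid axioms are immediate. The empty set satisfies $|\varnothing\cap S_i|=0\leqslant C/F$, so $\varnothing\in\mathcal{I}$. For the hereditary property, if $A\subseteq B$ and $B\in\mathcal{I}$, then $A\cap S_i\subseteq B\cap S_i$ yields $|A\cap S_i|\leqslant|B\cap S_i|\leqslant C/F$ for each $i$, so $A\in\mathcal{I}$.

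The only axiom requiring a genuine argument is the exchange property, which I would dispatch by a pigeonhole over the partition. Suppose $A,B\in\mathcal{I}$ with $|B|>|A|$. Because the blocks partition $S$, we have $|A|=\sum_{i\in\mathcal{S}}|A\cap S_i|$ and $|B|=\sum_{i\in\mathcal{S}}|B\cap S_i|$; since $|B|>|A|$, there must exist some $i^\ast\in\mathcal{S}$ with $|B\cap S_{i^\ast}|>|A\cap S_{i^\ast}|$. This forces $(B-A)\cap S_{i^\ast}$ to be nonempty, so I may select $j\in(B-A)\cap S_{i^\ast}$. Adding $j$ increases only the $i^\ast$-block count, giving $|(A\cup\{j\})\cap S_{i^\ast}|=|A\cap S_{i^\ast}|+1\leqslant|B\cap S_{i^\ast}|\leqslant C/F$ (the middle inequality holds because both counts are integers and $|B\cap S_{i^\ast}|$ strictly exceeds $|A\cap S_{i^\ast}|$), while all other blocks are unchanged; hence $A\cup\{j\}\in\mathcal{I}$.

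I expect no serious obstacle: the proof is purely combinatorial, and the exchange step is the only place needing care. The key is simply to exploit that $\{S_i\}$ is a partition so that cardinalities split blockwise, which makes both the hereditary and exchange arguments local to a single block. A minor point worth stating cleanly is that the strict block inequality $|B\cap S_{i^\ast}|>|A\cap S_{i^\ast}|$ between integers supplies the ``$+1$'' needed to keep $A\cup\{j\}$ within the bound $C/F$, so the argument goes through without assuming $C/F$ is an integer.
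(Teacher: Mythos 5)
Your proof is correct and follows the same route as the paper, which disposes of the lemma in one line by observing that (\ref{motroid_I}) is a partition matroid and citing \cite{submodular2}. What you add --- the explicit correspondence $y_{i,f}\in Y \Leftrightarrow x_{i,f}=1$ showing $|Y\cap S_i|=\sum_{f\in\mathcal{F}}x_{i,f}$, and the standard blockwise verification of the three axioms (with the pigeonhole exchange step and the integer argument that makes a non-integral bound $C/F$ harmless) --- is exactly the content the paper delegates to the citation, so your write-up is a valid self-contained expansion of the paper's proof rather than a different argument.
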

\begin{proof}
	Constraint (\ref{motroid_I}) is a partition matroid, which is a typical matroid \cite{submodular2}. 
\end{proof}
\noindent Thus, problem (\ref{problem}) can be reformulated as the following monotone submodular maximization problem over a matroid constraint.
\begin{equation}
\max \limits_{Y \in \mathcal{I}} \quad \mathcal{R}(Y) = \frac{1}{N_u}\sum \limits_{i \in \mathcal{S}} \sum \limits_{f \in \mathcal{F}} p^r_{i,f} \left[ \mathbbm{1}(y_{i,f} \in Y) + \mathbbm{1}(y_{i,f} \notin Y) \mathcal{R}_{i,f}(Y) \right].
\end{equation}

\subsection{Greedy Cache Placement Algorithm}
To solve a monotone submodular maximization problem, a greedy algorithm provides a $\frac{1}{2}$-approximation \cite{submodular2}, which means that the solution is at least $50\%$ of the optimal one. Although a randomized algorithm can get a higher approximation ratio as $\left( 1-1/e \right)$ \cite{1e}, its computational complexity is too high as the size of the ground set is $N_u \times N_{f}$. Thus, it is inapplicable in practice. Moreover, it will be shown in the simulation that the greedy algorithm provides a near optimal performance.

Denote $S^r$ as the set including all the elements that can be added into the cache placement set $Y$. We define the priority value of element $y_{j,f} \in S^r$ as the gain of adding $y_{j,f}$ into $Y$, which is given as
\begin{align}
g_{j,f}=&\mathcal{R}(Y \cup\{y_{j,f}\})-\mathcal{R}(Y) \notag \\
= & \frac{1}{N_u}\left\{ p^r_{j,f}\left[1-\mathcal{R}_{j,f}(Y)\right] + \sum \limits_{i \in \mathcal{S}\backslash \{j\}} p^r_{i,f} 
\mathbbm{1}(y_{i,f} \notin Y) \left[ \mathcal{R}_{i,f}(Y \cup \{y_{j,f}\})- \mathcal{R}_{i,f}(Y) \right] \right\}.
\end{align}
The main difficulty when developing the greedy algorithm is to efficiently evaluate the priority values, where the key is to evaluate the data offloading ratio when user $i$ requests file $f$ (i.e., $\mathcal{R}_{i,f}(Y)$). In Section III, we have provided an approximation for $\mathcal{R}_{i,f}(Y)$ as (\ref{ratio_g}), which can be used to evaluate the priority values. The procedure of the greedy algorithm is listed in Algorithm \ref{alg:greedy}. It starts from an empty set $Y= \varnothing$, which means that no file is cached. At each iteration, an element $y_{j^\star,f^\star} \in S^r$ with the maximum priority value is selected and added into $Y$. Meanwhile, $y_{j^\star,f^\star}$ is excluded from $S^r$. If user $j^\star$ cannot cache more files, all the elements in $S_{j^\star}$ are excluded from $S^r$. The process continues until no more file can be cached. 
\begin{algorithm}[!h]
	\caption{The Greedy Cache Placement Algorithm}
	\label{alg:greedy}
	\begin{algorithmic}[1]
		\STATE {Set $Y=\varnothing \Leftrightarrow \text{ set }x_{j,f}=0, \forall j \in \mathcal{S} \text{ and } f \in \mathcal{F}$.}
		\STATE {$S^r=S$.}
		\STATE {Initialize the priority values $
			\left\{ g_{j,f}=\mathcal{R}(Y \cup \{y_{j,f}\})-\mathcal{R}(Y)| j \in \mathcal{S} \text{ and } f \in \mathcal{F} \right\}$.}
		\WHILE{$|Y| < \frac{N_u C}{F} $}
		\STATE {$y_{j^\star,f^\star}=\arg \max \limits_{y_{j,f} \in S^r}  g_{j,f}$.}
		\STATE {Set $Y=Y \cup \{y_{j^\star,f^\star}\} \Leftrightarrow \text{ set } x_{j^\star,f^\star}=1$.}
		\STATE {$S^r=S^r-\{y_{j^\star,f^\star}\}$.}
		\IF {$|Y \cap S_{j^\star}|+1 > C/F$}
		\STATE {$S^r=S^r-S_{j^\star}$}
		\ENDIF
		\STATE {Update the priority values $\{ g_{j,f^\star} | j \in \mathcal{S}, y_{j,f^\star} \in S^r \}$.}
		\ENDWHILE
	\end{algorithmic}
\end{algorithm}

\subsection{Computational Complexity}
There are in total $N_uC/F$ iterations in Algorithm \ref{alg:greedy}. At each iteration, the complexity of updating the priority value is $\mathcal{O}(N_u T_v)$, where $T_v$ is the complexity of calculating the variance in (\ref{var_t}), and the complexity to find the maximum priority value is $\mathcal{O}(N_f)$. Thus, the overall computational complexity of Algorithm \ref{alg:greedy} is $\mathcal{O}\left(N_uC/F \left( N_u T_V+N_f \right) \right)$.
   
\section{Simulation Results}
In the simulation, we first validate the analytical results in Sections III and IV, and then, evaluate the performance of the proposed mobility-aware caching strategy. The content request probability is assumed to follow a Zipf distribution with parameter $\gamma_r$ (i.e., $p^r_{i,f}=\frac{f^{-\gamma_r}}{\sum \limits_{l \in \mathcal{F}} l^{-\gamma_r}}$, $i \in \mathcal{S}$ and $f \in \mathcal{F}$) \cite{d2d-cache}. 

\subsection{Data Offloading Ratio}

In this part, we evaluate the approximate expression of the data offloading ratio in (\ref{ratio}) via simulations. A random caching strategy \cite{randomcache} is applied, where the probabilities of the contents cached at each user are proportional to the file request probabilities. The inter-contact parameters $\lambda^I_{i,j}, i \in \mathcal{S}, j \in \mathcal{S} \backslash \{i\}$ are generated according to a gamma distribution as $\Gamma(4.43,1/1088)$ \cite{aggregate_real}. Similar as \cite{renewalmodel}, we assume the average inter-contact durations to be $5$ times as large as the average contact durations. Thus, the contact parameters are generated according to $\Gamma(4.43 \times 25,1/1088/5)$. The theoretical results are obtained by (\ref{ratio}), and the simulation results are obtained by randomly generating the contact and inter-contact durations according to exponential distributions.

\begin{figure}
	\begin{minipage}[t]{0.49\linewidth}
		\centering
		\includegraphics[width=3in]{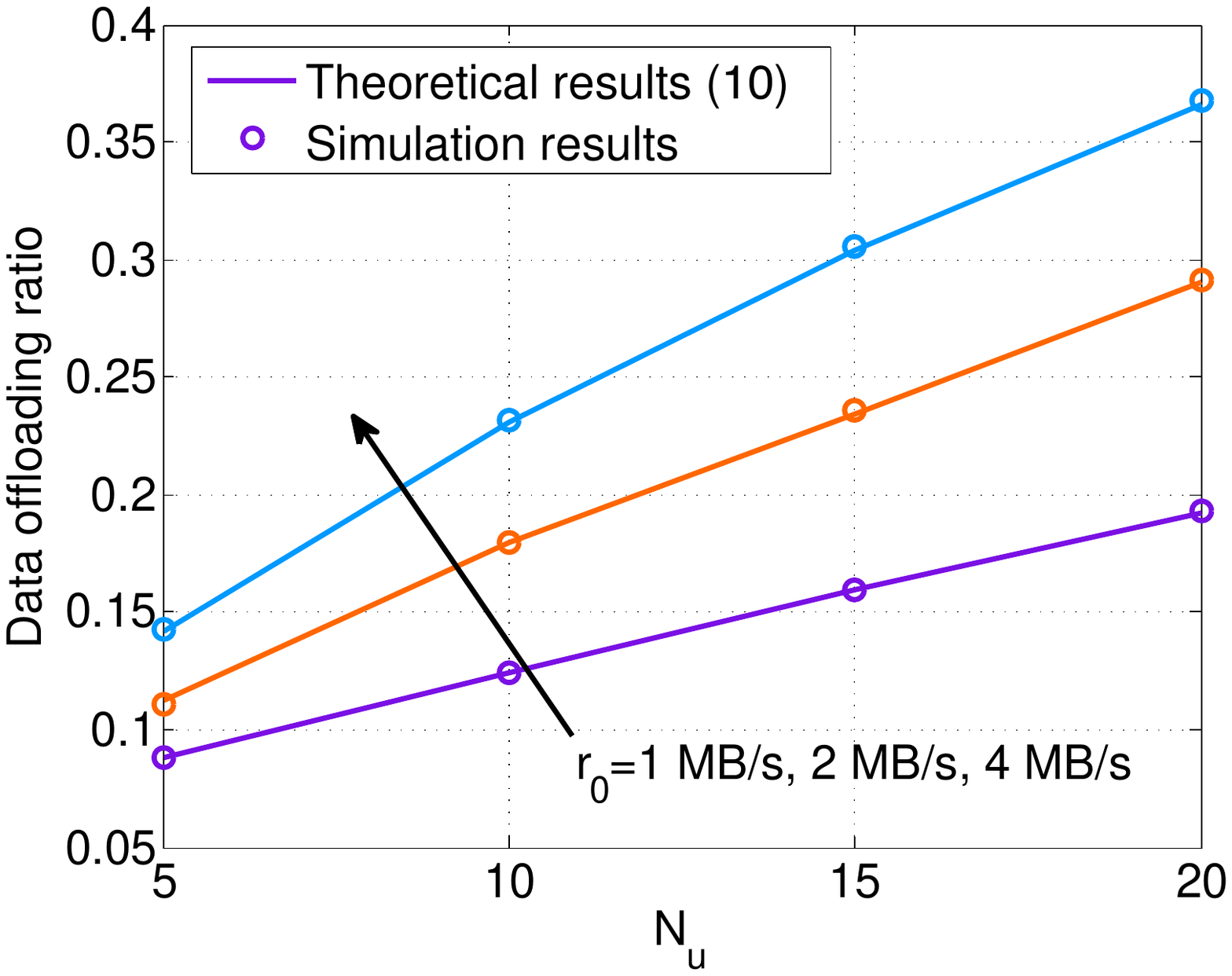}
		\caption{Data offloading ratio with $N_f=100$, $\tau_0 =300$ s, $\gamma_r=0.6$, $C=1$ GB, $F=300$ MB.}
		\label{fig_scheme1-1}
	\end{minipage}%
	\hspace{0.1in}
	\begin{minipage}[t]{0.49\linewidth}
		\centering
		\includegraphics[width=3in]{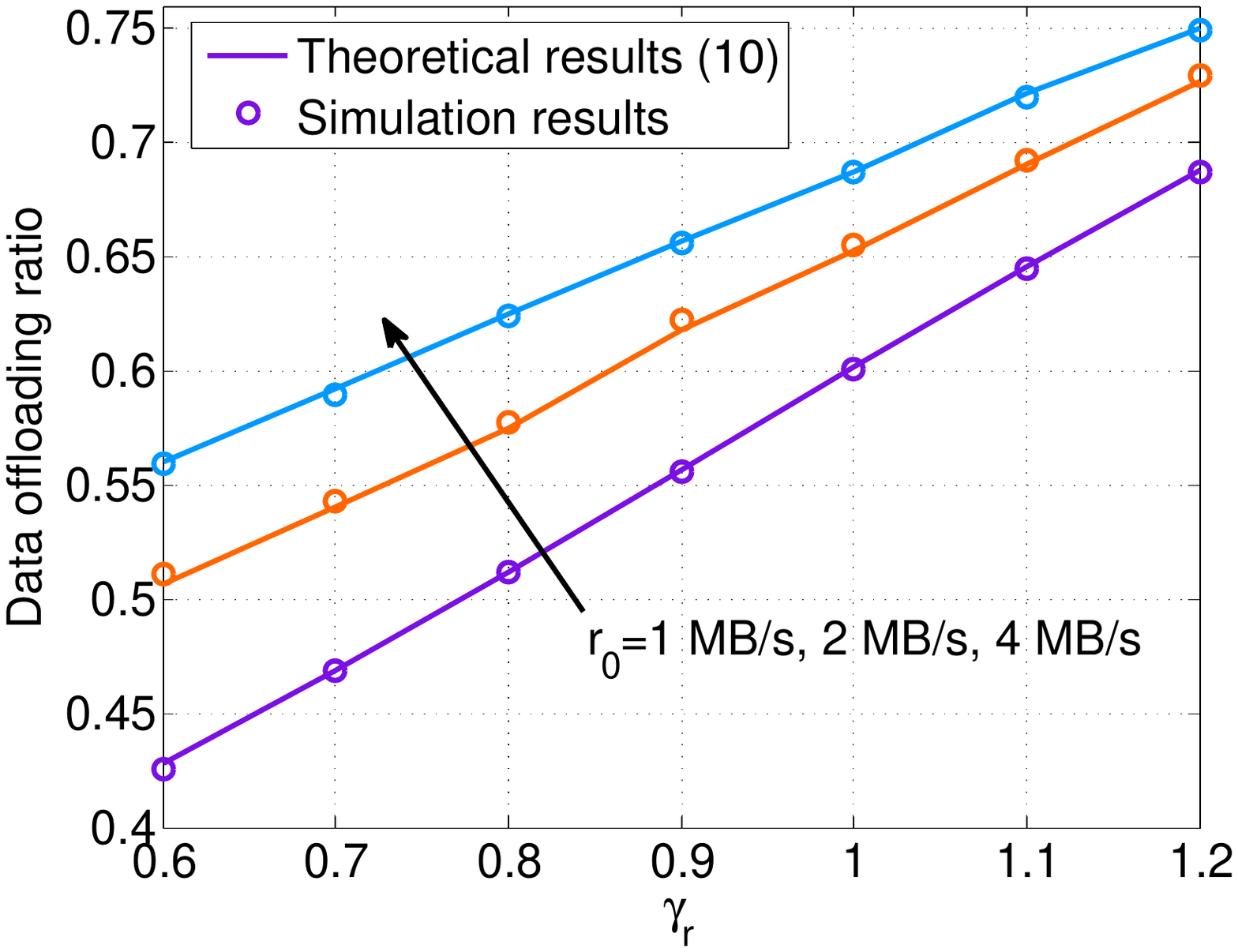}
		\caption{Data offloading ratio with $N_u=15$, $N_f=100$, $\tau_0 =120$ s, $C=1$ GB, $F=100$ MB.}
		\label{fig_scheme1-2}
	\end{minipage}
\end{figure}

Fig. \ref{fig_scheme1-1} validates the accuracy of the approximation in (\ref{ratio}) by varying the number of users. It is shown that the theoretical results are very close to the simulation results, which means that the approximate expression (\ref{ratio}) is quite accurate. Furthermore, the data offloading ratio increases with the number of users, which is brought by the increasing aggregate cache capacity and the content sharing via D2D links. Fig. \ref{fig_scheme1-2} validates the accuracy of the approximation in (\ref{ratio}) by varying the file request parameter, where a larger $\gamma_r$ means that the requests are more concentrated on the popular files. It is shown that the approximate expression (\ref{ratio}) is quite accurate with different values of $\gamma_r$, and a larger value of $\gamma_r$ brings a higher performance gain of caching.

\subsection{Effect of User Mobility} 
In this part, we show the effect of user mobility, assuming the random caching strategy \cite{randomcache}. As in Section IV, we change the user moving speed by $\mu$ times. It has been shown in \cite{aggregate_real} that Gamma distribution can well fit the reciprocal of the aggregate average inter-contact time. Thus, the inter-contact parameters $\lambda^I_{i,j}, i \in \mathcal{S}, j \in \mathcal{S} \backslash \{i\}$ are generated as $\mu \widehat{\lambda^{I}_{i,j}}$, where $\widehat{\lambda^{I}_{i,j}}$ follows a gamma distribution as $\Gamma(4.43,1/1088)$ \cite{aggregate_real}. Similarly, the contact parameters are generated as $\mu \widehat{\lambda^{C}_{i,j}}$, where $\widehat{\lambda^{C}_{i,j}}$ follows $\Gamma(4.43 \times 25,1/1088/5)$.


\begin{figure}[t]
	\centering
	\includegraphics[width=3.5 in]{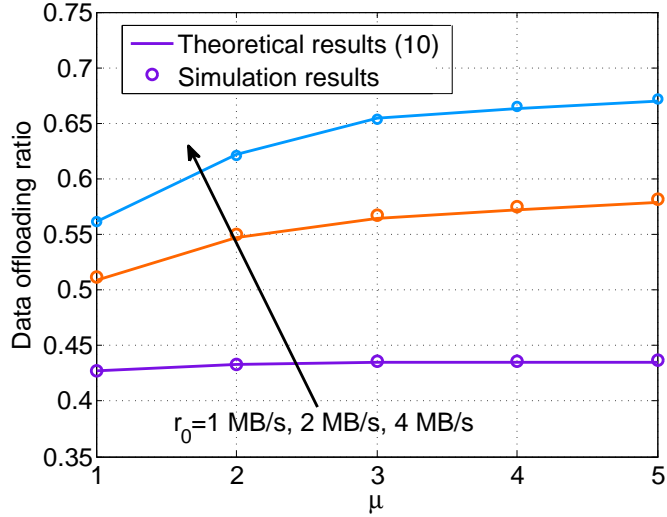}
	\caption{Data offloading ratio with $N_u=15$, $N_f=100$, $\tau_0 =120$ s, $\gamma_r=0.6$, $C=1$ GB, $F=100$ MB.}
	\label{fig_scheme2-2}
\end{figure}

In Fig. \ref{fig_scheme2-2}, the effect of $\mu$ is demonstrated. Firstly, the small gap between the theoretical and simulation results again verifies the accuracy of the approximate expression in (\ref{ratio}). It is also shown that the data offloading ratio increases with $\mu$, which confirms the conclusion in Proposition \ref{s_d}. We also observe that the increasing rate of the data offloading ratio decreases with the user moving speed, and increases with the data transmission rate.

\subsection{Mobility-Aware Caching Strategy}
In the following, we evaluate the mobility-aware caching strategy proposed in Section V. The following five caching strategies are compared.
\begin{itemize}
	\item Optimal mobility-aware caching: The optimal solution of problem (\ref{problem}), which is obtained by a DP algorithm similar to Algorithm 2 in  \cite{mobilitycaching}.\footnote{After making two changes, the DP algorithm in \cite{mobilitycaching} can be applied directly. The first one is to make the number of encoded segments of each file (i.e., $K_f$, $f \in \mathcal{N}_f$, in  \cite{mobilitycaching}) equals $1$. The other one is to change the utility function in  \cite{mobilitycaching} as $ U_f(x_{1,f}, \dots , x_{N_u,f} )=  \frac{1}{N_u}\sum \limits_{i \in \mathcal{S}} p^r_{i,f} \left[ x_{i,f} +  (1-x_{i,f}) \mathcal{R}_{i,f} \right]$.}
	\item Greedy mobility-aware caching: The suboptimal solution of problem (\ref{problem}) proposed in Section V.
	\item Greedy mobility-aware caching ignoring contact duration: This caching strategy only utilizes the information of inter-contact durations, while assuming that the whole file can be transmitted within one contact. This strategy is obtained by the suboptimal algorithm proposed in \cite{mobilitycaching}.
	\item Random caching: Each file is randomly cached by each user, where the caching probability is proportional to the file request probability \cite{randomcache}. 
	\item Popular caching: All the users cache the most popular files \cite{pupolarcaching}.
\end{itemize}

\begin{figure}
		\centering
		\includegraphics[width=3.5in]{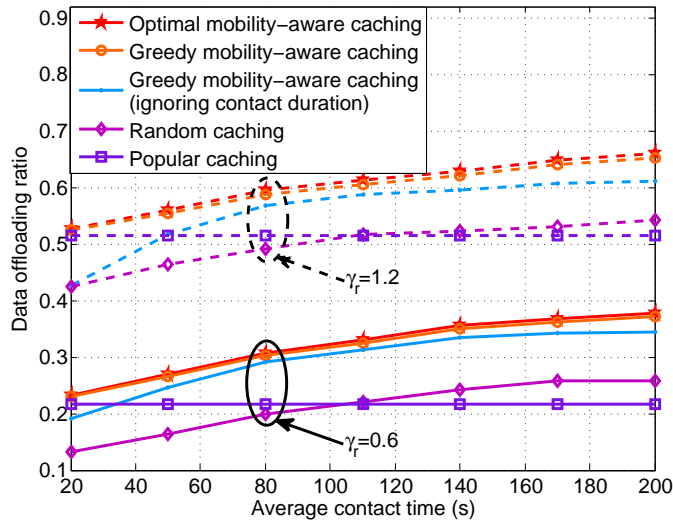}
		\caption{Data offloading ratio with $N_u=5$, $N_f=50$, $\tau_0 =300$ s, $C=1$ GB, $F=300$ MB, $r_0= 1.5$ MB/s.}
		\label{fig_scheme3-2}
\end{figure}

 
 In Fig. \ref{fig_scheme3-2}, we vary the average contact duration, denoted as $\bar{t^C}$, where the contact parameters $\lambda_{i,j}^C$ follow a gamma distribution, with expectation $1/\bar{t^C}$ and the same variance as $\Gamma(4.43,1/1088)$. 
 We observe that the performance of the greedy algorithm is quite close to the optimal one. Meanwhile, the mobility-aware caching strategies outperform the random and popular caching strategies, which demonstrates the advantage of exploiting the mobility information. Moreover, with the information of both the contact and inter-contact durations, the data offloading ratio can be further improved compared to the case ignoring the variation in contact durations. The conclusions do not change with different file request parameters. 
 Furthermore, the gap between the mobility-aware caching strategies considering and ignoring the contact durations first decreases and then increases with the average contact duration. When the contact duration is short, only part of a file can be delivered within one contact, and thus, it is critical to take the contact duration into account. On the other hand, when ignoring the contact duration, it is implicitly assumed that, a user is in the inter-contact duration with all the others when it requests a file. However, when the contact duration becomes larger, it is more likely that a user can start to download the file once the request is generated. Therefore, it is of critical importance to consider the contact duration, when it is comparable to the inter-contact duration. We also observe that, when the contact duration is very short, the popular caching strategy outperforms the mobility-aware caching strategy that ignores the contact duration, and approaches the proposed mobility-aware caching strategy. In such scenarios, the data shared via D2D links is quite limited, and the cached content is mainly used for a user's own need, and thus, popular caching is a good choice.
 
 We then evaluate the performance of the proposed mobility-aware caching strategy on a real-life data set collected during the INFOCOM 2016 conference \cite{CHSGCD07Impact}. There are $78$ selected conference participants, and each is distributed with one iMote, which is a Bluetooth radio device with a transmission range approximately as 30 meters. It is observed that the mobility pattern is quite different during the daytime and nighttime, i.e., users are much more frequently in contact with each other during the daytime. In the simulation, using the daytime data in the first day, we estimate the contact and inter-contact parameters (i.e., $\lambda^C_{i,j}$ and $\lambda^I_{i,j}$) by the inverse of the average contact and inter-contact durations, respectively. Then, different caching strategies are designed using the estimated contact and inter-contact parameters, and the performance on the daytime in the second day is shown in Fig. \ref{fig_scheme4-1}. We see that the proposed mobility-aware caching strategy outperforms the popular caching and random caching strategies by $5  \% \sim 35\%$ and $16 \% \sim 116\%$, respectively. The performance of the mobility-aware caching strategy that ignores the contact duration is also provided to show that the caching performance can be further improved with the information of contact durations. Moreover, the theoretical analysis of the data offloading ratio of the greedy mobility-aware caching strategy, which is calculated by (\ref{ratio}), is also shown in Fig. \ref{fig_scheme4-1}. The small gap between the simulation results and the theoretical value calculated by (\ref{ratio}) demonstrates that the alternating renewal process model and the approximated expression of the data offloading ratio in (\ref{ratio}) can provide a good approximation of practical performance.
 
 In this work, we simply the communication model for content delivery. To consider the performance in more realistic scenarios, we next investigate the effect of limited radio resources, which will limit the number of simultaneous communicating D2D pairs. We assume that each user can only serve one user's request at each time instant and there are in total 15 resource blocks allocated for D2D communications. For the case that one user receives multiple requests, it will randomly choose one to serve. When the number of D2D links is larger than the number of resource blocks, we will randomly choose 15 D2D links to transmit the requested files. Each user is assumed to make a new request right after the delay threshold of the previous request. The result is shown in Fig. \ref{fig_scheme4-1}, from which we see that the limited radio resources have very little effect on the performance. This is because the number of simultaneous D2D communications links is small, due to the randomness in user mobility.
 
 We also validate the performance of the proposed mobility-aware caching strategy in a campus scenario, based on a data-set collected in the Cambridge campus \cite{cam-leguay06}. The contact and inter-contact parameters (i.e., $\lambda^C_{i,j}$ and $\lambda^I_{i,j}$) are estimated based on the daytime data during the first three days. The performance on the fourth day is shown in Fig. \ref{fig_scheme4-2}. It again verifies the accuracy of the alternating renewal process model and the approximated expression of the data offloading ratio in (\ref{ratio}), and shows the inconspicuous effect of limited radio resources. We also observe that the proposed mobility-aware caching strategy outperforms the popular caching and random caching strategies by $6  \% \sim 25\%$ and $20 \% \sim 100\%$, respectively.

\begin{figure}
	\begin{minipage}[t]{0.49\linewidth}
		\centering
		\includegraphics[width=3in]{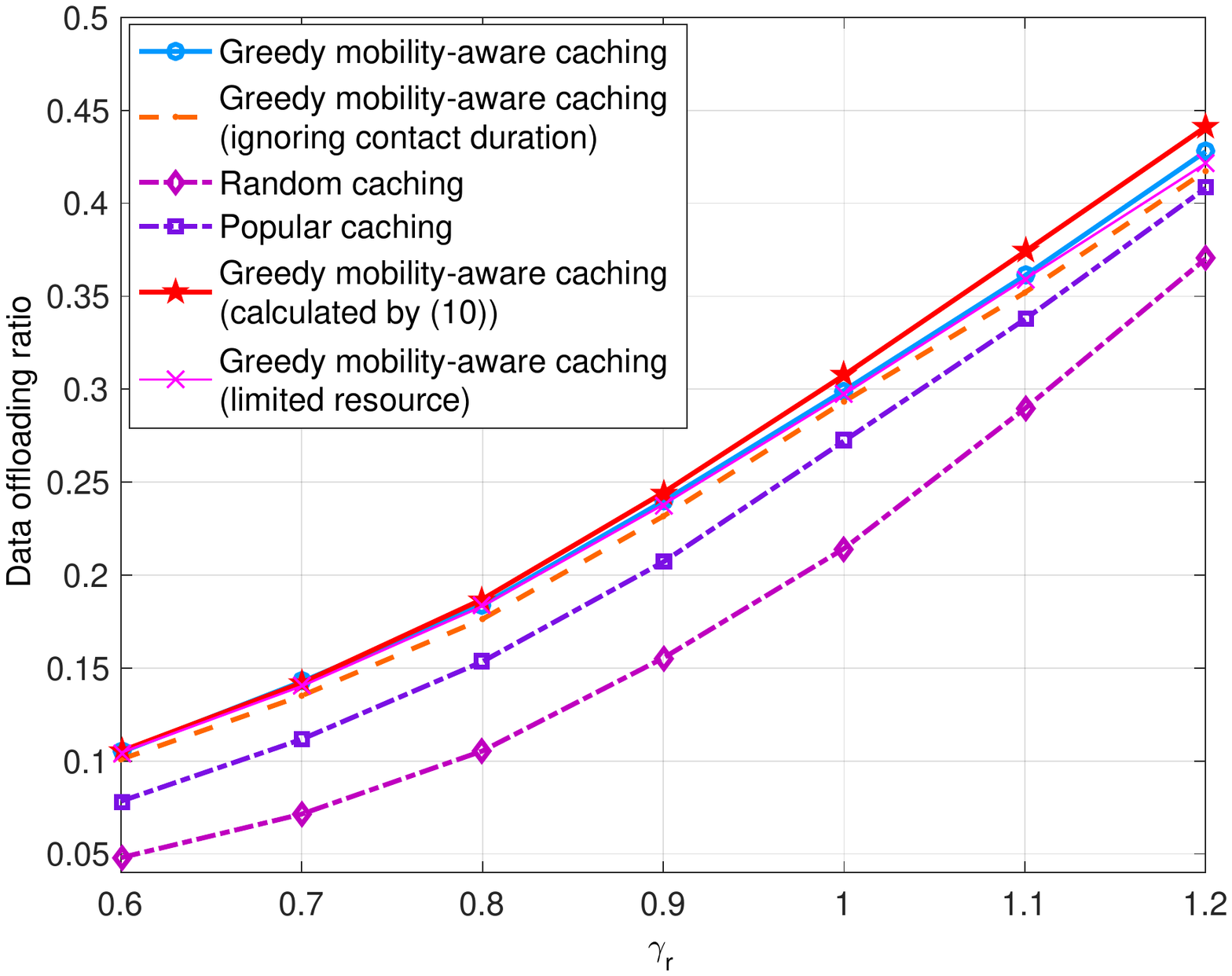}
		\caption{Data offloading ratio with $N_u=78$, $N_f=500$, $\tau_0 =300$ s, $C=1$ GB, $F=300$ MB, $r_0 = 2$ MB/s.}
		\label{fig_scheme4-1}
	\end{minipage}%
	\hspace{0.1in}
	\begin{minipage}[t]{0.49\linewidth}
		\centering
		\includegraphics[width=3in]{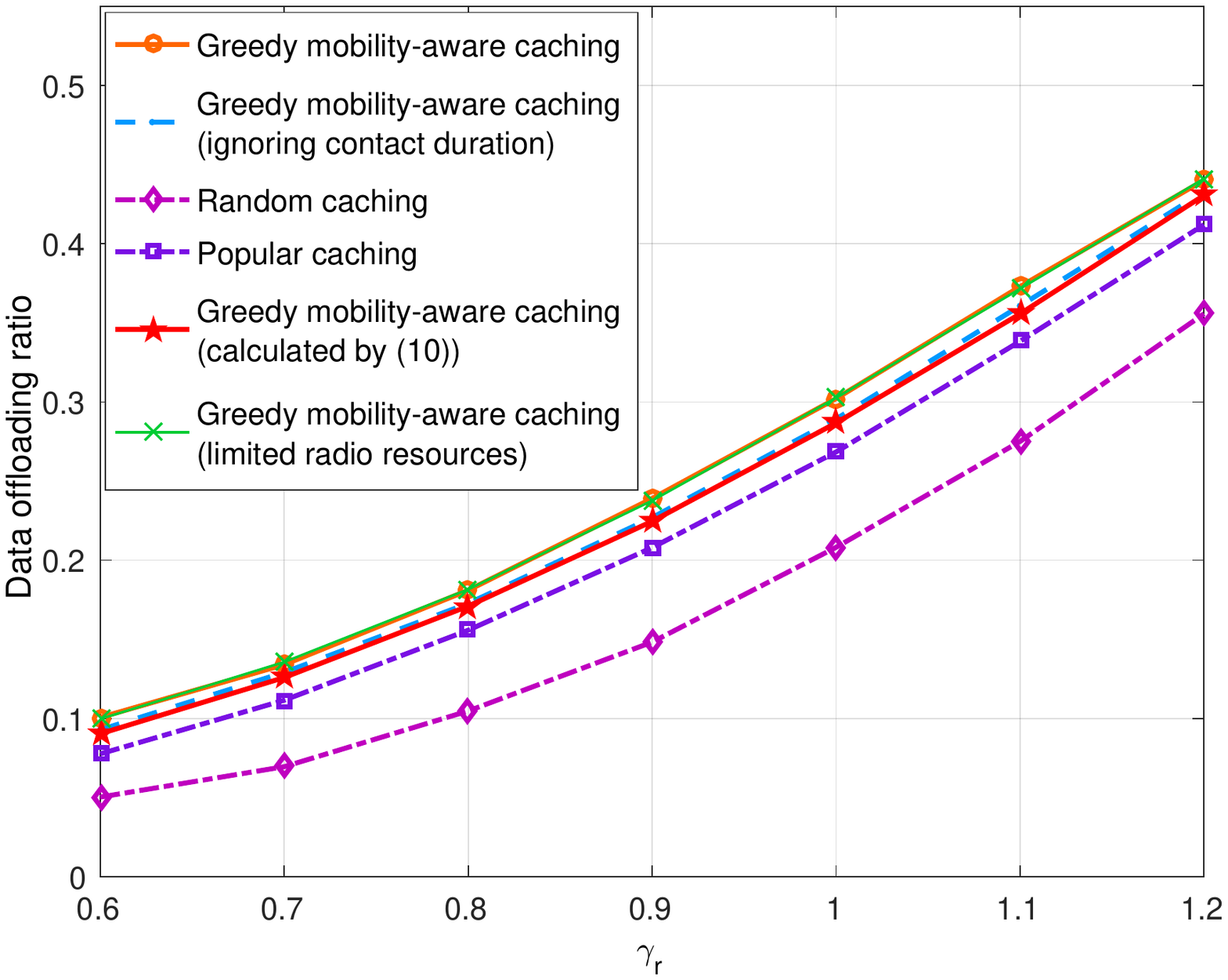}
		\caption{Data offloading ratio with $N_u=36$, $N_f=500$, $\tau_0 =600$ s, $C=1$ GB, $F=300$ MB, $r_0 = 1$ MB/s.}
		\label{fig_scheme4-2}
	\end{minipage}
\end{figure}


\section{Conclusions}

In this paper, we investigated a D2D caching network with mobile users. A tractable expression of the data offloading ratio was firstly derived and then used to prove that the data offloading ratio increases with the user speed. This result is valid in the scenario with low-to-medium speeds, where the transmission rate does not change with the user moving speed. The extension to the case with varying transmission rates is interesting for future investigation. The analytical results were also applied to develop a mobility-aware caching strategy by utilizing the statistical contact and inter-contact information. Simulation results validated the accuracy of the approximate expression of the data offloading ratio, and demonstrated that the data offloading ratio increases with the user speed, while the increasing rate decreases with the user speed. Moreover, we also observed that it is more critical to take the contact durations into account when they are relatively short or comparable to the inter-contact durations. One future direction is to develop online caching strategies in such systems. It is also interesting to consider more sophisticated resource allocation schemes and rate adaptation, as well as smarter user selection, during the content delivery phase.

\section*{Appendix}
\subsection{Proof of Lemma \ref{ev}}
As the timelines of different user pairs are independent, the expectation of the communication duration when user $i$ requests file $f$, which is not in its own cache, can be expressed as 
\begin{equation}
\mathbb{E} [ \tau^c_{i,f} ]=\lim \limits_{T^r_{i,f} \to \infty} \int _{T^r_{i,f}}^{T^r_{i,f}+\tau_0 } \left[ 1- \prod \limits_{j \in \mathcal{S},x_{j,f}=1}\left(1- \mathbb{E} H_{i,j}(t) \right) \right] dt. 
\end{equation}
Since the timeline between each pair of users is modeled as an alternating renewal process, according to Chapter 7 in \cite{renewalprocess}, we have $\lim \limits_{t \to \infty} \Pr[H_{i,j}(t)=1]=\frac{\lambda^I_{i,j}}{\lambda^C_{i,j}+\lambda^I_{i,j}}$, denoted as $p^I_{i,f}$. Thus, $\lim \limits_{t \to \infty} \mathbb{E}[ H_{i,j}(t)]=p^I_{i,f}$, and then, the expectation in (\ref{expectation_t}) can be obtained.
The variance of the communication duration is 
\begin{align} \label{v}
\mathrm{Var} [ \tau^c_{i,f} ]=
2 \lim \limits_{T^r_{i,f} \to \infty}  \int_{T^r_{i,f}}^{T^r_{i,f}+\tau_0 } \int_{T^r_{i,f}}^{\theta} \Pr[H^f_{i}(\theta)=1,H^f_{i}(t)=1] dt d\theta -\left( \mathbb{E} [ \tau^c_{i,f} ] \right)^2.
\end{align}
According to Chapter 7 in \cite{renewalprocess}, $\Pr[H_{i,j}(\theta)=0|H_{i,j}(t)=0]=p^I_{i,f}+p^I_{i,f}e^{-({\lambda^C_{i,j}+\lambda^I_{i,j}})(\theta-t)}$. Then, when $T_{i,f}^r \to \infty$, we get
\begin{align} \label{prob}
& \Pr[H^f_{i}(\theta)=1,H^f_{i}(t)=1] \notag \\
=&\Pr[H^f_{i}(\theta)=1]-\Pr[H^f_{i}(\theta)=1,H^f_{i}(t)=0] \notag \\
=&1-2 \prod \limits_{j \in \mathcal{S},x_{j,f}=1}p^I_{i,f} + \prod \limits_{j \in \mathcal{S},x_{j,f}=1} p^I_{i,f}\left[ p^I_{i,f} + \left( 1-p^I_{i,f}\right) e^{-(\lambda_{i,j}^C+\lambda^I_{i,j})(\theta-t)} \right]
\end{align}
Let $u=\theta-t$ and substitute (\ref{prob}) into (\ref{v}), then we get (\ref{var_t}).

\subsection{Proof of Proposition \ref{E_od}}
Based on the beta approximation of the communication duration, we approximate the data offloading ratio of user $i \in \mathcal{S}$ requesting file $f \in \mathcal{F}$ by 
\begin{equation} 
\mathcal{R}^a_{i,f}=\mathbb{E} _{Z_{i,f}}\left[ \min \left( r_0 \tau_0  Z_{i,f}/F ,1 \right) \right].
\end{equation}
Since $Z_{i,f} \sim \text{Beta} \left( \alpha_{i,f}, \beta_{i,f} \right)$, we have
\begin{align} 
&\mathcal{R}^a_{i,f} \notag\\
&= \frac{r_0 \tau_0 }{F}  \int_{0}^{\frac{F}{r_0 \tau_0 }} \frac{\left( z \right)^{\alpha_{i,f}} \left(1- z \right)^{(\beta_{i,f}-1)}}{B\left( \alpha_{i,f} ,\beta_{i,f}\right)}  d z + \int_{\frac{F}{r_0 \tau_0 }}^{1} \frac{\left( z \right)^{\left(\alpha_{i,f}-1\right)} \left(1- z \right)^{\left(\beta_{i,f}-1\right)} }{B\left( \alpha_{i,f}, \beta_{i,f}\right)} d z, \notag \\
& = \frac{r_0 \tau_0 }{F} \cdot \frac{\alpha_{i,f}}{\alpha_{i,f}+\beta_{i,f}}\int_{0}^{\frac{F}{r_0 \tau_0 }}  \frac{\left( z \right)^{\alpha_{i,f}} \left(1- z \right)^{(\beta_{i,f}-1)}}{B\left( \alpha_{i,f} +1 ,\beta_{i,f}\right)}  d z \notag \\
& \quad \quad + \int_{\frac{F}{r_0 \tau_0 }}^{1} \frac{\left( z \right)^{\left(\alpha_{i,f}-1\right)} \left(1- z \right)^{\left(\beta_{i,f}-1\right)} }{B\left( \alpha_{i,f}, \beta_{i,f}\right)} d z, \notag \\
& = \frac{r_0 \tau_0 }{F} \cdot \frac{\mathbb{E}[\tau^c_{i,f}]}{\tau_0 }I_{\frac{F}{r_0 \tau_0 }}\left( \alpha_{i,f}+1,\beta_{i,f}\right)+1-I_{\frac{F}{r_0 \tau_0 }}\left( \alpha_{i,f},\beta_{i,f}\right) , \notag \\
& = 1-I_{\frac{F}{\tau_0 r_0}} \left (\alpha_{i,f},\frac{\tau_0 -\mathbb{E}[\tau^c_{i,f}]}{\mathbb{E}[\tau^c_{i,f}]} \alpha_{i,f} \right)
+\frac{\mathbb{E}[\tau^c_{i,f}]r_0}{F} I_{\frac{F}{\tau_0 r_0}} \left(\alpha_{i,f}+1,\frac{\tau_0 -\mathbb{E}[\tau^c_{i,f}]}{\mathbb{E}[\tau^c_{i,f}]} \alpha_{i,f} \right).
\end{align}
Thus, we get the expression of the data offloading ratio as shown in Proposition \ref{E_od}.

\subsection{Proof of Lemma \ref{evv}}
When the contact and inter-contact parameters are scaled by $\mu$, $p^I_{i,j}(\mu)=\frac{\mu\lambda^C_{i,j}}{\mu\left( \lambda_{i,j}^C+\lambda_{i,j}^I \right)}=\frac{\lambda^C_{i,j}}{\lambda_{i,j}^C+\lambda_{i,j}^I }$ remains the same. Thus the expectation in (\ref{expectation_t}) does not change. Denote $a_0$, $a_{\mathbf{Z}}$, and $\kappa_{\mathbf{Z}}$ as in (\ref{ax}), and we get
\begin{align} 
\mathrm{Var} \left[ \tau^c_{i,f} \right] &= 2 \int_0^{\tau_0 } (\tau_0 -u) \left( a_0 +\sum \limits_{\mathbf{Z} \in \{0,x_{l,f}\}^{N_f} \backslash \mathbf{0}}  a_{\mathbf{Z}} e^{-\kappa_{\mathbf{Z}u}}\right) du -(\tau_0 )^2 a_0, \notag \\
&=2 a_0 \sum \limits_{\mathbf{Z} \in \{0,x_{l,f}\}^{n_f} \backslash \mathbf{0}} \frac{a_{\mathbf{Z}}}{\kappa_{\mathbf{Z}}} \left( \tau_0 -\frac{1}{\kappa_{\mathbf{Z}}} +\frac{1}{\kappa_{\mathbf{Z}}} 
\exp(-\kappa_{\mathbf{Z}}\tau_0 ) \right).
\end{align}
Then, by scaling the contact and inter-contact parameters, we get the result in (\ref{v_ts}).

\subsection{Proof of Lemma \ref{s_t}}

As shown in Lemma \ref{evv}, when the user speed changes by $\mu$ times, the expectation of the communication duration in (\ref{expectation_t}) does not change, while the variance changes. To prove that $\mathrm{Var}[\tau^c_{i,f}(\mu)]$ decreases with $\mu$, we will prove that $\frac{
\partial \mathrm{Var}[\tau^c_{i,f}(\mu)]}{\partial \mu}<0$. The partial derivation of $\mathrm{Var}[\tau^c_{i,f}(\mu)]$ is
\begin{align} \label{dir}
&\frac{\partial \mathrm{Var}[\tau^c_{i,f}(\mu)]}{\partial \mu}=  2 a_0 \sum \limits_{\mathbf{Z} \in \{0,x_{l,f}\}^{n_f} \backslash \mathbf{0}} \frac{a_{\mathbf{Z}}}{\mu^3\kappa_{\mathbf{Z}}} \mathcal{A}_1(x_{\mathbf{z}}),
\end{align}
where
$\mathcal{A}_1(x_{\mathbf{z}})=-x_{\mathbf{z}}-x_{\mathbf{z}} e^{-x_{\mathbf{z}}}-2(e^{-x_{\mathbf{z}}}-1)$ and $x_{\mathbf{z}}  \triangleq \mu \kappa_{\mathbf{Z}} \tau_0 > 0$.
Since $\mathcal{A}'_1(x_{\mathbf{z}})=-1+(1+x_{\mathbf{z}}) e^{-x_{\mathbf{z}}} < -1+(1+x_{\mathbf{z}}) \frac{1}{1+x_{\mathbf{z}}}=0$, $\mathcal{A}_1(x_{\mathbf{z}})$ is a decreasing function of $x_{\mathbf{z}}$. Thus, $\mathcal{A}_1(x_{\mathbf{z}}) < \mathcal{A}_1(0)=0$. According to (\ref{dir}), when $\exists 0<\lambda_{i,l}^C,\lambda_{i,l}^I<\infty \text{ and } x_{l,f}=1$, where $l \in \mathcal{S}$, there exists $ 0<\kappa_{\mathbf{Z}}<\infty$, where $\mathbf{Z} \in \{0,x_{l,f}\}^{n_f} \backslash \mathbf{0}$.
Thus, we have $\frac{
\partial \mathrm{Var}[\tau^c_{i,f}(\mu)]}{\partial \mu}<0$. The parameter $\alpha_{i,f}$ given in (\ref{beta_p}) is a decreasing function of $\mathrm{Var}[\tau^c_{i,f}(\mu)]$, and thus, it increases with $\mu$.

\subsection{Proof of Lemma \ref{s_g}}
To simplify the expression in (\ref{ratio_g}), denote $q \triangleq \frac{F}{\tau_0  r_0} \in (0,1)$, $y \triangleq \frac{\tau_0 -\mathbb{E}[\tau^c_{i,f}]}{\mathbb{E}[\tau^c_{i,f}]} \ge 0$, and $\alpha \triangleq \alpha_{i,f}$. The expression in (\ref{ratio_g}) can be rewritten as a function of $\alpha$ as
\begin{align}
	&\mathcal{R}_{i,f}=1- \frac{\int_0^q (1-\frac{u}{q}) u^{\alpha-1} (1-u)^{y \alpha-1} du }{B(\alpha,y\alpha)},
\end{align}
where $B(\cdot , \cdot)$ is the beta function. Let $g(\alpha)=1-\mathcal{R}_{i,f}$ with the derivative of $g(\alpha)$ given by
\begin{align}
	g'(\alpha)=
	&\frac{1}{B(\alpha,y\alpha)} \Bigg\{\int_0^q (1-\frac{u}{q}) u^{\alpha-1} (1-u)^{y \alpha-1}[\ln u + y \ln (1-u)] du \notag \\
	&- \int_0^q (1-\frac{u}{q}) u^{\alpha-1} (1-u)^{y \alpha-1} du D(y,\alpha) \Bigg\},
\end{align}
where $D(y,\alpha)=\psi(\alpha)+y \psi(y\alpha)-(1+y)\psi[(1+y)\alpha]$ and $\psi(\cdot)$ is the digamma function \cite{abramowitz1964handbook}. If $q=1$, $g'(\alpha)=\frac{\partial [y/(1+y)]}{\partial \alpha}=0$.
Denote $\mathcal{A}_2(q)=\frac{B(\alpha,y\alpha)}{q}g'(\alpha)$, and then we have $\mathcal{A}_2(1)=0$ and
\begin{align} 
	\lim \limits_{q \to 0^{+}} \mathcal{A}_2(q)= \lim \limits_{q \to 0^{+}} \int_0^q (q-u) u^{\alpha-1} (1-u)^{y \alpha-1}[\ln u + y \ln (1-u)] du
\end{align}
Since $ q \ge u \ge 0$ and $y \ge 0$, $(q-u) u^{\alpha-1} (1-u)^{y \alpha-1} \ge 0$ and $\ln u + y \ln (1-u) \le 0$, thus, $\lim \limits_{q \to 0^{+}} \mathcal{A}_2(q) \le 0$. The derivative of $\mathcal{A}_2(q)$ is 
\begin{align}
	\mathcal{A}'_2(q)= &\int_0^q u^{\alpha-1} (1-u)^{y \alpha-1}[\ln u + y \ln (1-u)] du \notag \\
	&- \int_0^q u^{\alpha-1} (1-u)^{y \alpha-1} du D(y,\alpha).
\end{align}
Thus, $\mathcal{A}'_2(1)=\frac{\partial B(\alpha,y \alpha)}{\partial \alpha}-\frac{\partial B(\alpha,y \alpha)}{\partial \alpha}=0$ and $\lim \limits_{q \to 0^{+}} \mathcal{A}'_2(q) \le 0$.
Then, we get $\mathcal{A}''_2(q)= q^{\alpha-1} (1-q)^{y \alpha-1}[\ln q + y \ln (1-q)-D(y,\alpha)]$. Let $\mathcal{A}_3(q)=q^{1-\alpha} (1-q)^{1-y \alpha} \mathcal{A}''_2(q)$, then, there is one zero point of $\mathcal{A}'_3(q)=\frac{1-(1+y)q}{q(1-q)}$ in $(0,1]$. Thus, there is one inflection point of $\mathcal{A}_3(q)$. Considering that $\lim \limits_{q \to 0^{+}}\mathcal{A}_3(q)=\lim \limits_{q \to 1^{-}}\mathcal{A}_3(q)=-\infty$, the sign of $\mathcal{A}_3(q)$ may be negative, or first negative, then positive, and then negative, when $q$ increases in $(0,1)$. If $\mathcal{A}_3(q)<0$, then $\mathcal{A}''_2(q)<0$, when $q \in (0,1)$. However, we have $\lim \limits_{q \to 0^{+}} \mathcal{A}'_2(q) \le \mathcal{A}'_2(1)$, which means that $\mathcal{A}'_2(q)$ cannot be a decreasing function in $(0,1)$. Thus, the sign of $\mathcal{A}_3(q)$ is first negative, then positive, and then negative, when $q$ increases in $(0,1)$. Since $\mathcal{A}''_2(q)$ has the same sign with $\mathcal{A}_3(q)$ in $(0,1)$, $\mathcal{A}'_2(q)$ first decreases, then increases, and then decreases when $q$ increases in $(0,1)$. Considering that $\lim \limits_{q \to 0^{+}} \mathcal{A}'_2(q) \le 0$ and $\mathcal{A}'_2(1)=0$, the sign of $\mathcal{A}'_2(q)$ must be first negative, and then positive in $(0,1)$. Therefore, when $q$ increases in $(0,1)$, $\mathcal{A}_2(q)$ first decreases, and then increases. Considering that $\lim \limits_{q \to 0^{+}} \mathcal{A}_2(q) \le 0$ and $\mathcal{A}_2(1)=0$, we have $\mathcal{A}_2(q)<0$ in $(0,1)$. Since $g'(\alpha)=\frac{q}{B(\alpha,y\alpha)} \mathcal{A}_2(q)$, we get $g'(\alpha)<0$ in $(0,1)$. Thus, $g(\alpha)$ decreases with $\alpha$, and $\mathcal{R}_{i,f}=1-g(\alpha)$ increases with $\alpha$.

\subsection{Proof of Proposition \ref{s_d}}
The data offloading ratio in (\ref{ratio}) increases with $\mathcal{R}_{i,f}$ if $x_{i,f}=0$, $i \in \mathcal{S}$, $f \in \mathcal{F}$. Thus, based on Lemmas  \ref{s_t} and \ref{s_g}, we get that the data offloading ratio when user $i$ requests file $f$ from other users (i.e., $\mathcal{R}_{i,f}$) increases with the user speed when $\exists j \in \mathcal{S}$ such that $0<\lambda_{i,j}^C,\lambda_{i,j}^I<\infty \text{ and } x_{j,f}=1$. Otherwise, $\mathcal{R}_{i,f}=0$. Accordingly, the data offloading ratio when user $i$ requests file $f$ (i.e., $x_{i,f}+ (1-x_{i,f}) \mathcal{R}_{i,f} $) increases with the user speed when $x_{i,f}=0$, and $\exists j \in \mathcal{S}$ such that $0<\lambda_{i,j}^C,\lambda_{i,j}^I<\infty \text{ and } x_{j,f}=1$. Otherwise, it remains the same. Since we consider that there exists a pair of users $i,j \in S, i\neq j$ with $0 < \lambda_{i,j}^I,\lambda_{i,j}^C  < \infty$, and a file $f \in \mathcal{F}$, such that $x_{i,f}=1$ and $x_{j,f}=0$, the data offloading ratio increases with the user speed.

\subsection{Proof of Lemma \ref{sub_p}}
To prove $\mathcal{R}(Y)$ is a monotone submodular set function, we will prove that it satisfies Proposition \ref{secondorder}. In the following, we will first prove that $\mathcal{R}_{i,f}(Y)$ in (\ref{rifY}) satisfies Proposition \ref{secondorder}. Let $A \subseteq S$, $y_{j_1,f},y_{j_2,f} \in S-A$, and $j_1 \ne j_2$, and then we have
\begin{align}
& \mathcal{R}_{i,f}(A \cup\{y_{j_2,f}\})-\mathcal{R}_{i,f}(A) \notag \\
=& \mathbb{E} \left[ \min \left( r_0 \tau^c_{i,f}(A \cup\{y_{j_2,f}\})/F ,1 \right) \right]-\mathbb{E} \left[ \min \left( r_0 \tau^c_{i,f}(A)/F ,1 \right) \right] \notag \\
= & \mathbb{E}\left[ \min \left( r_0 \tau^c_{i,f}(A)/F + r_0 D_2/F ,1 \right) - \min \left( r_0 \tau^c_{i,f}(A)/F ,1 \right) \right],
\end{align}
where $D_2$ is the duration that user $i$ is in contact with user $j_2$, and does not in contact with the users in $A$, given by
\begin{equation}
D_2=\lim\limits_{T^r \to \infty} \int _{T^r}^{T^r+ \tau_0 } \mathbbm{1} \left[ H_{i,j}(t)=0, \forall y_{j,f} \in A, \text{ and } H_{i,j_2}(t)=1 \right] dt.
\end{equation}
Then, we get 
\begin{align}
 \mathcal{R}_{i,f}(A \cup\{y_{j_2,f}\})-\mathcal{R}_{i,f}(A)
=
\begin{cases}
\mathbb{E} \left[r_0  D_2 /F \right] & \text{if } \tau^c_{i,f}(A)<F/r_0-D_2, \\
\mathbb{E} \left[1- r_0 \tau^c_{i,f}(A)/F \right] & \text{if } F/r_0-D_2 \leqslant \tau^c_{i,f}(A) \leqslant F/r_0, \\
0 & \text{if } \tau^c_{i,f}(A) > F/r_0.
\end{cases}
\end{align}
Thus, $ \mathcal{R}_{i,f}(A \cup\{y_{j_2,f}\})-\mathcal{R}_{i,f}(A) \geqslant 0$ and $ \mathcal{R}_{i,f}(A \cup\{y_{j_1,f}\})-\mathcal{R}_{i,f}(A) \geqslant 0$, i.e., $ \mathcal{R}_{i,f}(Y)$ is monotone.
Similarly, we have
\begin{align}
& \mathcal{R}_{i,f}(A \cup\{y_{j_1,f},y_{j_2,f}\})-\mathcal{R}_{i,f}(A\cup\{y_{j_1,f}\}) \notag\\ &=
\begin{cases}
\mathbb{E} \left[r_0  D'_2 /F \right] & \text{if } \tau^c_{i,f}(A)<F/r_0-D'_2-D_1, \\
\mathbb{E} \left[1- r_0 \tau^c_{i,f}(A)/F-r_0 D_1/F \right] & \text{if } F/r_0-D'_2-D_1 \leqslant \tau^c_{i,f}(A) \leqslant F/r_0-D_1, \\
0 & \text{if } \tau^c_{i,f}(A) > F/r_0-D_1,
\end{cases}
\end{align}
where $D_1$ is the duration that user $i$ is in contact with user $j_1$, and does not in contact with the users in $A$, given by
\begin{equation}
D_1=\lim\limits_{T^r \to \infty} \int _{T^r}^{T^r+ \tau_0 } \mathbbm{1} \left[ H_{i,j}(t)=0, \forall y_{j,f} \in A, \text{ and } H_{i,j_1}(t)=1 \right] dt,
\end{equation}
and $D'_2$ is the duration that user $i$ is in contact with user $j_2$, and does not in contact with the users in $A \cup \{y_{j_1,f}\}$, given by
\begin{equation}
D'_2=\lim\limits_{T^r \to \infty} \int _{T^r}^{T^r+ \tau_0 } \mathbbm{1} \left[ H_{i,j}(t)=0, \forall y_{j,f} \in A \cup \{y_{j_1,f}\}, \text{ and } H_{i,j_2}(t)=1 \right] dt.
\end{equation}
The following inequalities reveal the relationship among $D_1$, $D_2$, and $D'_2$. Frist, we have
\begin{equation}
D_2-D'_2=\lim\limits_{T^r \to \infty} \int _{T^r}^{T^r+ \tau_0 } \mathbbm{1} \left[ H_{i,j}(t)=0, \forall y_{j,f} \in A, \text{ and } H_{i,j_2}(t)=H_{i,j_1}(t)=1 \right] dt \geqslant 0,
\end{equation}
which is the duration that user $i$ is in contact with user $j_1$ and $j_2$, and does not in contact with the users in $A$. We also have 
\begin{equation}
D_1+D'_2-D_2=\lim\limits_{T^r \to \infty} \int _{T^r}^{T^r+ \tau_0 } \mathbbm{1} \left[ H_{i,j}(t)=0, \forall y_{j,f} \in A \text{ and } H_{i,j_1}(t)=1 \text{ and }H_{i,j_2}(t)=0 \right] dt \geqslant 0,
\end{equation} 
which is the duration that user $i$ is in contact with user $j_1$, and does not in contact with the users in $A\cup \{y_{j_2,f}\}$.
Let $\Delta=\left[\mathcal{R}_{i,f}(A \cup\{y_{j_2,f}\})-\mathcal{R}_{i,f}(A) \right] - \left[ \mathcal{R}_{i,f}(A \cup\{y_{j_1,f},y_{j_2,f}\})-\mathcal{R}_{i,f}(A\cup\{y_{j_1,f}\}) \right]$, and we get
\begin{align}
\Delta=
\begin{cases}
\mathbb{E} \left[r_0 \left(D_2- D'_2 \right) /F \right] & \text{if } \tau^c_{i,f}(A)<F/r_0-D'_2-D_1, \\
\mathbb{E} \left[r_0 \left(\tau^c_{i,f}(A)+D_1+D_2 \right)/F-1 \right] & \text{if } F/r_0-D'_2-D_1 \leqslant \tau^c_{i,f}(A) < F/r_0-D_2, \\
\mathbb{E} \left[r_0 D_1 /F \right] & \text{if } F/r_0-D_2 \leqslant \tau^c_{i,f}(A) < F/r_0-D_1, \\
\mathbb{E} \left[1- r_0 \tau^c_{i,f}(A)/F \right] &  \text{if } F/r_0-\min(D_1,D_2) \leqslant \tau^c_{i,f}(A) < F/r_0, \\
0 & \text{if } \tau^c_{i,f}(A) \geqslant F/r_0.
\end{cases}
\end{align}
Thus, when $\tau^c_{i,f}(A)<F/r_0-D'_2-D_1$, we have $\Delta=\mathbb{E} \left[r_0 \left(D_2- D'_2 \right) /F \right] \geqslant 0$. When $F/r_0-D'_2-D_1 \leqslant \tau^c_{i,f}(A) < F/r_0-D_2$, we have
\begin{align}
\Delta & =\mathbb{E} \left[r_0 \left(\tau^c_{i,f}(A)+D_1+D_2 \right)/F-1 \right] \notag \\
& \geqslant \mathbb{E} \left[r_0 \left(\tau^c_{i,f}(A)+D_1+D'_2 \right)/F-1 \right] \notag \\
& \geqslant \mathbb{E} \left[r_0 \left(F/r_0 - D'_2 -D_1 +D_1+D'_2 \right)/F-1 \right] \notag \\
& = 0.
\end{align}
When $F/r_0-\min(D_1,D_2) \leqslant \tau^c_{i,f}(A) < F/r_0$, we have $\Delta=\mathbb{E} \left[1- r_0 \tau^c_{i,f}(A)/F \right] \geqslant 0$. Thus, we get $\Delta \geqslant 0$, which means that
\begin{equation}
\mathcal{R}_{i,f}(A \cup\{y_{j_2,f}\})-\mathcal{R}_{i,f}(A) \geqslant \mathcal{R}_{i,f}(A \cup\{y_{j_1,f},y_{j_2,f}\})-\mathcal{R}_{i,f}(A\cup\{y_{j_1,f}\}).
\end{equation}
Then, we will prove that $\mathcal{R}(Y)$ satisfies Proposition \ref{secondorder}. 
Let $A \subseteq S$, $y_{j_1,f_1},y_{j_2,f_2} \in S-A$, and $y_{j_1,f_1} \ne y_{j_2,f_2}$, and we first consider the case that $f_1=f_2=f$, which gives
\begin{align}
&\mathcal{R}(A \cup\{y_{j_2,f}\})-\mathcal{R}(A) \notag \\
= & \frac{1}{N_u}\left\{ p^r_{j_2,f}\left[1-\mathcal{R}_{j_2,f}(A)\right] + \sum \limits_{i \in \mathcal{S}\backslash \{j_2\}} p^r_{i,f} 
\mathbbm{1}(y_{i,f} \notin A) \left[ \mathcal{R}_{i,f}(A \cup \{y_{j_2,f}\})- \mathcal{R}_{i,f}(A) \right] \right\}, \notag \\
= & \frac{1}{N_u}\Bigg\{ p^r_{j_2,f}\left[1-\mathcal{R}_{j_2,f}(A)\right] + \sum \limits_{i \in \mathcal{S}\backslash \{j_1,j_2\}} p^r_{i,f} 
\mathbbm{1}(y_{i,f} \notin A) \left[ \mathcal{R}_{i,f}(A \cup \{y_{j_2,f}\})- \mathcal{R}_{i,f}(A) \right] \notag \\
& \quad + p^r_{j_1,f} \left[ \mathcal{R}_{j_1,f}(A \cup \{y_{j_2,f}\})- \mathcal{R}_{j_1,f}(A) \right] \Bigg\}, \notag 
\end{align}
\begin{align}
\geqslant & \frac{1}{N_u}\Bigg\{ p^r_{j_2,f}\left[1-\mathcal{R}_{j_2,f}(A \cup \{y_{j_1,f}\})\right] \notag \\
& \quad + \sum \limits_{i \in \mathcal{S}\backslash \{j_1,j_2\}} p^r_{i,f} 
\mathbbm{1}(y_{i,f} \notin A) \left[ \mathcal{R}_{i,f}(A \cup\{y_{j_1,f}, y_{j_2,f}\})- \mathcal{R}_{i,f}(A\cup\{y_{j_1,f}\}) \right] \Bigg\}, \notag \\
= & \mathcal{R}(A \cup\{y_{j_1,f}, y_{j_2,f}\})-\mathcal{R}(A\cup\{ y_{j_1,f}\}) \geqslant 0.
\end{align}
For the case $f_1 \ne f_2$, we have
\begin{align}
&\mathcal{R}(A \cup\{y_{j_2,f_2}\})-\mathcal{R}(A) \notag \\
= & \frac{1}{N_u}\left\{ p^r_{j_2,f_2}\left[1-\mathcal{R}_{j_2,f_2}(A)\right] + \sum \limits_{i \in \mathcal{S}\backslash \{j_2\}} p^r_{i,f_2} 
\mathbbm{1}(y_{i,f_2} \notin A) \left[ \mathcal{R}_{i,f}(A \cup \{y_{j_2,f_2}\})- \mathcal{R}_{i,f_2}(A) \right] \right\}, \notag \\
= & \mathcal{R}(A \cup\{y_{j_1,f_1}, y_{j_2,f_2}\})-\mathcal{R}(A\cup\{ y_{j_1,f_1}\}) \geqslant 0.
\end{align}
Accordingly, $\mathcal{R}(Y)$ satisfies Proposition \ref{secondorder}, and thus, it is a monotone submodular set function.
\bibliographystyle{IEEEtran}
\bibliography{IEEEabrv,report}
\end{document}